\newtheorem{remark}{Remark}
\newtheorem{theorem}{Theorem}
\newtheorem{lemma}{Lemma}
\DeclarePairedDelimiter\floor{\lfloor}{\rfloor}
\pgfplotsset{compat=newest} 
\pgfplotsset{plot coordinates/math parser=false} 
\newlength\figureheight 
\newlength\figurewidth
\newcommand{\ie}{\textit{i.e.,}\xspace}
\newcommand{\eg}{\textit{e.g.,}\xspace}
\begin{document}
	%%%%%%%%%%%%%%%%%%%%%%%%%% COLORS
\definecolor{brown}{cmyk}{0,0.81,1,0.60}
\definecolor{magenta}{rgb}{0.4,0.7,0}
\definecolor{gray}{rgb}{0.5,0.5,0.5}
\definecolor{red}{rgb}{1,0,0}
\definecolor{green}{rgb}{0.5,0,0.5}
\definecolor{blue}{rgb}{0,0,1}

%%%%%%%%%%%%%%%%%%%%%%%%% COMMENTS BY CO-AUTHORS

\ifthenelse{\isundefined{\final}} {
% still in draft mode
%\newcommand{\comment}[1]{{\color{gray}[\textsf{#1}]}}
\newcommand{\vaneet}[1]{{\color{green}(VA: #1)}}
\newcommand{\shuai}[1]{{\color{red}(SH: #1)}}
\newcommand{\feng}[1]{{\color{blue}(FQ: #1)}}
\newcommand{\anis}[1]{{\color{brown}(AE: #1)}}
\newcommand{\shubho}[1]{{\color{magenta}(SS: #1)}}
}{
% in final submission mode
%\newcommand{\comment}[1]{}
\newcommand{\vaneet}[1]{}
\newcommand{\shuai}[1]{}
\newcommand{\feng}[1]{}
\newcommand{\anis}[1]{}
\newcommand{\shubho}[1]{}
}

\newcommand{\eat}[1]{}

\newcommand{\BULLET}{\vspace{+.03in} \noindent $\bullet$ \hspace{+.00in}}

%
% --- Author Metadata here ---
%\conferenceinfo{WOODSTOCK}{'97 El Paso, Texas USA}
%\CopyrightYear{2007} % Allows default copyright year (20XX) to be over-ridden - IF NEED BE.
%\crdata{0-12345-67-8/90/01} % Allows default copyright data (0-89791-88-6/97/05) to be over-ridden - IF NEED BE.
% --- End of Author Metadata ---

\title{Optimized Preference-Aware Multi-path Video Streaming with
Scalable Video Coding }

%\author{Anis Elgabli, Vaneet Aggarwal, Ke Liu, and Mark Bell\\email:\{aelgabli, vaneet, liu2067, mrb\}@purdue.edu}

 \author{Anis Elgabli, Ke Liu, and Vaneet Aggarwal \thanks{The authors are with Purdue University, West Lafayette, IN 47907, email:\{aelgabli, liu2067, vaneet\}@purdue.edu}}
 
\maketitle
% A category with the (minimum) three required fields

\begin{abstract}

Most client hosts are equipped with multiple network interfaces (e.g., WiFi and cellular networks). Simultaneous access of multiple interfaces can significantly improve the users' quality of experience (QoE) in video streaming. An intuitive approach to achieve it is to use Multi-path TCP (MPTCP). However, the deployment of MPTCP, especially with link preference, requires OS kernel update at both the client and server side, and a vast amount of commercial content providers do not support MPTCP.  Thus, in this paper, we realize a multi-path video streaming algorithm in the application layer instead, by considering Scalable Video Coding (SVC), where each layer of every chunk can be fetched from only one of the orthogonal paths. We formulate the quality decisions of video chunks subject to the available bandwidth of the different paths, chunk deadlines, and link preferences as an optimization problem. The objective is to to optimize a QoE metric that maintains a tradeoff between maximizing the playback rate of every chunk and ensuring fairness among chunks. The QoE is a weighted some of the following metrics: skip/stall duration, average playback rate, and quality switching rate. However, the weights are chosen such that pushing more chunks to the same quality level is more preferable over any other choice. Even though the formulation is a  non-convex discrete optimization, we show that the problem can be solved optimally with a polynomial complexity in some special cases. We further propose an online algorithm where several challenges including bandwidth prediction errors, are addressed. Extensive emulated experiments in a real testbed with real traces of public dataset reveal the robustness of our scheme and demonstrate its significant performance improvement compared to other multi-path algorithms. %The results are further validated by emulated experiments in a real testbed.

%Finally, we built a testbed that
%streams synthetic video contents over TCP/IP networks, and implemented our streaming algorithm on
%the testbed. We found emulation outcome is very close to the
%simulation results and incurs very low run-time overhead, further
%confirming that our algorithm can be practically implemented
%and deployed on today's mobile devices.

\end{abstract}

%Finally, we built a testbed that
%streams synthetic video contents over TCP/IP networks, and implemented our streaming algorithm on
%the testbed. We found emulation outcome is very close to the
%simulation results and incurs very low run-time overhead, further
%confirming that our algorithm can be practically implemented
%and deployed on today's mobile devices.

%\vspace{-.1in}

\begin{IEEEkeywords}Video Streaming, Multi-path, Scalable Video Coding,  Video Quality, Stall Duration, Multi-path TCP, Non Convex Optimization
	\end{IEEEkeywords}

\section{Introduction}
\label{sec:intro}
	It is common that today's client hosts are equipped with multiple network interfaces. For example, mobile devices (\eg Apple iOS 7 \cite{ios_mptcp}) inherently support WiFi and cellular networks at the same time. The provision of simultaneous access of the multiple interfaces significantly improves the performance of various applications, \eg web browsing \cite{han2015anatomy}, as they can leverage the bandwidths of both the links (or paths). Video streaming is one of the major sources of  traffic in mobile networks. While its popularity is on the rise, its quality of experiences (QoE) is still often far from satisfactory. In this paper, we propose a set of efficient video streaming algorithms to improve users' QoE using multiple paths simultaneously. In multi-path video streaming, one of the links is in general preferable as compared to the other. For instance, the users may not wish to use too much of cellular link since it is, in many cases, more expensive (limited plans) and less energy efficient (far from the base station). Therefore, the scenario where some of the links are less preferable than others need to be considered.

%Hence, the performance impact of the {\em multi-path} on mobile applications that their data can be transmitted via either the WiFi or the LTE link is widely studied recently \cite{chen:imc,criotoru:nsdi,deng:imc,han:conext,nika:www}.%This creates a unique opportunity for improving application performance
%by leveraging multiple interfaces simultaneously.
%
%For example, starting from Apple iOS 7, certain built-in apps support multipath~. %https://support.apple.com/en-us/HT201373
%
%On Android devices, Samsung recently introduced ``Download Booster'', a feature allowing downloading large files over both Wi-Fi and cellular~\cite{dl_booster}. %http://www.samsung.com/uk/support/skp/faq/1061358

% A recent study indicates that video accounted for $40-55$ percent of total mobile traffic in 2015, and is expected to increase by about 55 percent by 2021 ~\cite{ericsson_report}. While its popularity is on the rise, mobile users’ quality of experience (QoE) for video streaming is still often far from satisfactory, especially under challaging conditions such as unstable connectivity and mobility. Therefore, multi-path is also introduced to improve the QoE of mobile video streaming applications. 

An intuitive approach to enable multi-path is to replace the conventional transport (\ie TCP) with Multi-path TCP (MPTCP~\cite{mptcp}), which is the de-facto multi-path solution allowing applications to transparently use multiple paths. Specifically, MPTCP opens multiple sub-flows (usually one over each path), distributes the data onto the sub-flows at the sender, and reassembles data from each path at the receiver. The key advantage of MPTCP is that it allows applications to use multiple paths without changing the existing socket programming interface. Despite these advantages, the deployment of MPTCP is sluggish. A vast amount of commercial content providers do not support MPTCP~\cite{chen2016msplayer,deconinck16:pam} because it requires OS kernel update at both the client and server side. To make things worse, MPTCP uses special TCP extensions that are often blocked by middle-boxes  of the commercial content providers (e.g MPTCP over Port 80/443 is blocked by most U.S. cellular carriers~\cite{ashkan16_mobicom}). Implementing MPTCP with link preference further requires message exchange between the rate adaptation logic at the application layer and  MPTCP in order to disable/enable parallel TCP connections. Thus, in this paper, we will consider an approach for fetching video encoded using scalable video coding (SVC) on multiple paths without the use of MPTCP.  %\textcolor{red}{However, we also show that our algorithm can be extended to streaming over MPTCP. In fact, we show that SVC streaming over MPTCP is a special case of our algorithm. We evaluate both scenarios, \ie with and without MPTCP}  

%\textcolor{red}{Why SVC?:} 

There are two popular coding techniques, Advanced Video Coding (AVC, e.g MPEG4-AVC) and Scalable Video Coding (SVC~\cite{SVC_encoding}).
In AVC, each video chunk is stored into $L$ {\em independent} encoding versions. When fetching
a chunk, 
the player's adaptation mechanism, Adaptive Bit Rate (ABR) streaming, needs to select one out of the $L$ versions based on its estimation of the network condition and the buffer capacity. 

In SVC, each chunk is encoded into ordered \emph{layers}: one \emph{base layer} (Layer 0) with the lowest playable quality, and multiple \emph{enhancement layers} (Layer $i>0$) that further improve the chunk quality based on layer $i-1$.  For decoding a chunk up to enhancement layer $i$, a player must download all layers from 0 to $i$. Thus, adaptive SVC streaming can allow playback at a lower quality if all the enhancement layers have not been fetched while ABR streaming does not allow playback if the chunk is not fully downloaded. Adaptive SVC streaming has been shown to provide better adaptiveness and scalability than ABR \cite{SVCMetrics,Sanchez12}. That is why we choose SVC as the coding scheme in this paper and propose a set of streaming algorithms using SVC.

%Moreover, the quality decision need to consider that the whole chunk need to be fetched on one of the links for non MPTCP streaming which might effect the quality decision if the link with the maximum bandwidth is low}. \textcolor{red}{In contrast}, \textcolor{red}{ Therefore, chunk quality can be increased by downloading more layers before its playback time (deadline), and it can still be played if higher layers were not fetched by its deadline}. \textcolor{red}{Moreover, different layers of the same chunk can be fetched in parallel over different links which help fetching more layers of the same chunk and improving its quality}.  \textcolor{red}{ I think better statement is: "When downloading a chunk, a player must download all layers from 0 to $i-1$ in order to decode, layer $i$" since layers 1 to $i-1$ do not really need to be downloaded before the $i$th one}. In contrast, different versions (\ie qualities) of chunks in AVC are independent \textcolor{red}{ this statement is repeated, I do not think we need it}. 

Instead of using MPTCP, we realize ``multi-path" in application layer, \ie initiating a separate connection in the application (\eg browser, video player) via one of the network interfaces/link (\eg WiFi and LTE) using conventional TCP, and each layer of a chunk is fetched using one of the connections. Thus, the streaming algorithm decides whether to fetch a layer of the chunk or not, and which link to use for fetching. This approach requires no change to the server side and is compatible with any middle-box. We note that the different layers of a chunk can be fetched using different links using adaptive SVC streaming, while the entire chunk is fetched from the same link when using ABR. This flexibility helps to provide additional improvement on QoE compared to ABR-based mechanisms. As shown in Section \ref{sec:eval}, our SVC-based multi-path streaming algorithms outperform  one of the recent state-of-the-art ABR-based multi-path streaming algorithms, MSPlayer \cite{chen2016msplayer}.

We consider two classes of streaming algorithms: {\em skip based} and {\em no-skip based} streaming.
 The former is for real-time streaming: each chunk is associated with a deadline, chunks not received by their  deadlines are skipped. For no-skip based streaming, if a chunk cannot be received by its deadline, it will not be skipped; instead, a stall (re-buffering) will incur until it is fully received. For both the scenarios,  we formulate the adaptive streaming algorithm as an optimization problem for perfectly predicted bandwidths of the available links. The objective is to maximizes the video quality and minimize the stalls/skips simultaneously (c.f., Section 3) while respecting the link preference, bandwidth and chunk deadlines constraints. Even though the formulation is a non-convex optimization problem, we can show that the optimal solution can be achieved in polynomial time for some practical cases. In practice, the future bandwidth cannot be perfectly predicted, but can be estimated for a smaller window ahead using a crowd-sourced method to obtain historical data \cite{Riiser12,GTube}, or harmonic mean of the bandwidth achieved for the last few seconds \cite{chen2016msplayer}. Therefore, we propose an online sliding based algorithm that solves the proposed optimization problem every $\alpha$ seconds to make a decision for the next $W$ chunks. The perfect prediction case forms an upper bound to the algorithm performance with imperfect prediction. Our proposed online adaptation streaming algorithms incorporate the imperfect information by using the bandwidth prediction methods in~\cite{Riiser12,GTube,chen2016msplayer} (these prediction methods are not perfect thus have errors) as shown in the online algorithms in Section 3 and 4. We also show that MPTCP is a special case of our algorithm which represents the scenario that combines the WiFi and LTE links as a single link that has the total bandwidth of both links. Moreover, for MPTCP, the proposed algorithm works for both adaptive SVC and ABR schemes.

\if 0
\begin{figure}[t]
   \centering
   \small
   %\vspace{-.2in}
   \includegraphics[width=.25\textwidth]{encoding.pdf}
 %  \vspace{-.1in}
   \caption{AVC vs. SVC encoding.}
   \label{fig:encoding}
   \vspace{-.in}
\end{figure}
\fi

%
%To motivate the multipath SVC problem, assume a user wants to watch the {\em Big Buck Bunny} video~\cite{lederer:mmsys} on her mobile device with Wi-Fi and LTE interfaces.
%The video uses regular DASH chunked encoding and
%supports multiple encoding rates with the highest one being around 4Mbps.
%But this rate is not achievable if the Wi-Fi network does not provide sufficient bandwidth.
%In this case, the user can further use her LTE network to provide additional network capacity.

 % The above discussion raises an interesting research question: \emph{can we leverage SVC's  layered encoding scheme to design a multipath streaming solution for both scenarios, MPTCP and application layer?}. 

%MPTCP SVC based streaming can be implemented when MPTCP is exist, and application layer based streaming is implemented when MPTCP is not an option.
%\vspace{-.1in}

%In this paper, we propose a robust multi-path adaptive streaming algorithm for SVC under two scenarios, application layer and MPTCP. This algorithm exploits the prediction of future network conditions over WiFi and LTE links for improving the SVC video streaming quality.

{\bf Our Contributions:} The main contributions of the paper are as follows.

\BULLET We  formulate the multi-path SVC video streaming with perfect  bandwidth prediction  as an optimization problem, whose objective is to maximize the users' quality of experience (QoE). We consider two classes of streaming algorithms:  skip based and  no-skip based streaming. For both algorithms, the goal of the scheduling algorithm is to determine up to which layer we need to fetch for each chunk (except for those skipped in realtime streaming), such that the overall playback bitrate is maximized and the number of stalls or skipped chunks is minimized without violating link preference, bandwidth, and chunk deadlines constraints.  We also propose an online algorithm for the scenario where  bandwidth prediction is not perfect, \ie available for short period ahead and has errors.

\BULLET The proposed problem is a non-convex discrete optimization problem. There are discrete variables and  non-convex constraints. However, we develop an efficient algorithm that solves this specific problem in polynomial time and shown to be optimal for some practical cases. Thus, we provide a class of discrete optimization problem that is solvable optimally in polynomial time under certain assumptions. 
Specifically, we solve the proposed integer-constrained problem using  an easy-to-solve packing based algorithm.

%\BULLET We propose streaming algorithms for the case in which one of the links is preferred over the other one. Both classes of streaming algorithms:  skip based and  no-skip based streaming are considered. 

\BULLET We also propose special cases of our approach, \ie a set of single path adaptive streaming algorithms using MPTCP for all the above cases, \ie with and without preference, and skip-based or no-skip-based. These algorithms can be used on either of ABR or adaptive SVC streaming schemes.

\BULLET  {We evaluated our algorithms using a TCP/IP test bed with real SVC encoded videos and bandwidth traces from public datasets collected from commercial  networks. The evaluation demonstrates that our approach is robust to prediction errors, and works well with a short prediction window, where we estimate the bandwidth using harmonic mean of the bandwidth values of the past few seconds. We evaluated our algorithms against a number of adaptive streaming strategies including the multi-path version of the buffer-based approach (MP-BBA) \cite{BBA} and the prediction based algorithms such as MSPlayer~\cite{chen2016msplayer}.} 
The results demonstrate that our algorithm outperforms them by improving the key QoE metrics such as the playback quality, the number of layer switches, and the number of skips or stalls. For example, our skip based streaming algorithm was able to achieve average playback quality that is $25\%$, and $35\%$ higher than MP-BBA and MSPlayer respectively with lower stall/skip durations. The preference-aware adaptive streaming algorithms were compared with the preference-aware MPTCP based algorithms in \cite{han2016mp} and it is shown that the proposed algorithm obtains lower skips, higher average quality, and lower link $1$ usage thus demonstrating  improvement in all these metrics.

%it experienced very negligible skip duration (less than $2\%$, while MSPlayer skips about $10\%$ of the total chunks). 

%\BULLET In addition to the simulations, we built a testbed that streams
%synthetic SVC contents over TCP/IP networks.  The emulation outcome is very close to the simulation results, further confirming that
%our algorithm can be practically implemented and deployed on
%today's mobile devices.

The rest of the paper is organized as follows. Section \ref{sec:related} discusses the related work. Section \ref{sec:problem} describes the problem formulation. In section \ref{skipalgo} and \ref{noSkip}, a set of polynomial run time algorithms are provided for solving the non convex problem considering Skip and No-Skip scenarios respectively. Moreover, optimality is shown for some special cases. Section \ref{sec:eval} presents the trace-driven evaluation results with comparison to the different baselines. Section \ref{sec:concl} concludes the paper. %Sections \ref{proof_skip} and \ref{sec:pref} show the optimality of the proposed algorithms for a special case. In particular, when one of the links is only used to avoid skips/stalls.}

%The proofs of the optimality of the proposed algorithms in Section \ref{sec:alg}  are provided in the Appendix. 

%\input{realtime_algo}
%\input{realtime_sims}
%\input{nonrealtime_algo}
%\input{nonrealtime_sims}
%\input{motivation}
%\vspace{-.2in}
%\vspace{-.1in}

\section{Related Work}
\label{sec:related}

Video streaming has received a lot of attention from both the academia and industry in the past decade.
There are ABR and adaptive SVC adaptation algorithms. Some of the widely used ABR streaming techniques include  MPEG-DASH \cite{DASH}, Apple's HLS~\cite{HLS}, Microsoft's Smooth Streaming~\cite{SS}, and Adobe's HDS~\cite{HDS}. In recent studies,  various approaches for making ABR streaming decisions have been investigated, for example, by using control theory~\cite{MPC,Miller15}, Markov Decision Process~\cite{Jarnikov11}, machine learning~\cite{Claeys14}, client buffer information~\cite{BBA}, and data-driven techniques~\cite{Liu12,C3,CS2P}. 
 
  SVC received the final approval to be standardized as an amendment of the H.264/MPEG-4 AVC (Advanced Video Coding) standard
in 2007~\cite{svcrelease}. Although
much less academic research has been conducted on SVC compared to AVC-style schemes over regular H.264,
there exist some studies of using SVC to adapt video playback quality to network conditions.
A prior study~\cite{LayeredAdaptation} proposed a server-based quality adaptation mechanism that performs coarse-grained rate adaptation by adding or dropping layers of a video stream.
While this mechanism was designed to be used over UDP with a TCP-friendly rate control, more recent research has explored techniques that use SVC over HTTP.
A study~\cite{Sanchez12} compared SVC with regular H.264 encoding (H.264/AVC).
Their results suggest SVC outperforms AVC
for scenarios such as VoD and IPTV through more effective rate adaptation.
The work~\cite{SVCDataset} published the first dataset and toolchain for SVC.
Some prior work~\cite{Andelin12,Sieber13} proposed new rate adaptation algorithms for SVC that prefetch future base layers and backfill current enhancement layers. Even though optimization-based formulations have been proposed for video streaming in the past \cite{MPC,6115763,Ahmedin:2014:ESV:2741897.2741942,FENG2017}, the optimality guarantees for the proposed algorithms are limited. In contrast, this paper shows the optimality of the proposed algorithm for a non-convex discrete optimization problem.

% \textcolor{red}{~\cite{lima2011optimal} ILP optimization problem was formulated in order to find single link rate decision, and it was solved approximately by using a Linear Programming (LP) approach}.
%

%Our work differs from the above in that we

%develop low-complexity algorithms that explicitly and strategically leverage the future knowledge of network conditions for better rate adaptation. %We also investigate new use cases and usage contexts of SVC such as 360 videos, multipath, and auto-play videos that were not investigated before.

%{\bf{Streaming that Exploits Network Condition Prediction.}}
%
The knowledge of the future network conditions can play an important role in Internet video streaming.
%
%Therefore, it is natural to exploit predictions of future network conditions in streaming decision making process.
%
A prior study~\cite{Zou15} investigated the performance gap between state-of-the-art streaming approaches and the approach with accurate bandwidth prediction.
The results indicate that prediction brings additional performance boost for ABR streaming, and thus motivates our study. The bandwidth have been shown to be predictable for some time ahead using a crowd-sourced method to obtain historical data \cite{Riiser12,GTube}, or harmonic mean of the past bandwidth \cite{chen2016msplayer}. 
%but should be combined with rate stabilization functions.
%

Adaptive streaming strategies have been proposed for multi-path channels in~\cite{chen2016msplayer} where a heuristic based on prediction, MSPlayer,  was proposed for streaming AVC video using WiFi and LTE. In their proposed heuristic, alternate chunks are downloaded using WiFi and LTE connections respectively. The key differences with this work are: 1) We use the flexibility of SVC, where the different layers can be fetched over different paths, and 2) The proposed algorithm is shown to be optimal for some special cases. Recently, the authors of \cite{han2016mp} gave novel algorithms for using multi-path TCP to stream AVC videos where the  primary objective was to reduce the usage of LTE as well as minimizing the stall. The approach in \cite{han2016mp} uses a rate adaptation algorithm, like BBA  \cite{BBA} or Festive \cite{jiang2012improving}, and uses MPTCP to fetch AVC videos at the same quality levels while reducing the usage of LTE. This approach works on top of rate adaptation techniques, which do  not explicitly minimize LTE usage in their objective. In contrast, this paper propose algorithms that consider preference of one link over the other explicitly and use the approach of  \cite{han2016mp}  as a comparison. Moreover,  \cite{han2016mp} considered use of MPTCP and a no-skip based version. However, in this paper we consider both skip and no-skip based scenarios, as well as both options of using or not using MPTCP. The versions that use MPTCP in this paper can also be used directly using AVC rather than SVC since they do not exploit fetching a layer from only one of the links.

%\vspace{-.0.5in}
%\vspace{-.1in}

\section{Skip Based Streaming: Problem Formulation}
\label{sec:problem}

In this section, we describe our problem formulation considering two links (\eg WiFi and cellular networks) in the exposition, but the formulation and proposed algorithms can be easily extended to  more links. In skip based streaming, the video is played with an initial start-up, (\textit{i.e.,} buffering) delay $s$ and there is a playback deadline for each of the chunks where chunk $i$ need to be downloaded by time $deadline(i)$. Chunks not received by their respective deadlines are skipped. The objective of the proposed formulation to find the fetching policy (which link should fetch which of the chunk layers) such that the number of skipped chunks is minimized as the first priority and overall playback bitrate is maximized as the next priority without violating link preference, bandwidth, and chunk deadlines constraints.

%make fine grained decision per layer in order to find over which chunk should it be fetched such that the objective (QoE) is maximized} %Further, use of one of the link will be penalized, modeling the scenario that the cellular link is expensive.

We first assume that the future bandwidth is perfectly known beforehand, the buffer capacity is infinite, and each layer is encoded at constant bit rate (CBR). In other words, all chunks have the same $n$th layer size. We will relax both the perfect prediction and the infinite buffer size assumptions later on in this section. Moreover, we will evaluate the proposed algorithm with videos that are encoded at Variable Bit Rates (VBR). With these assumptions, we give a formulation for skip based video streaming. Let us assume a video divided into $C$ chunks (segments), where every chunk is of length $L$ seconds, is encoded in Base Layer (BL) with rate $r_0$ and $N$ enhancement layers ($E_1, \cdots, E_{N}$) with rates $r_1, \cdots, r_{N}$,  respectively. Note that $Y_n=L*r_n$ is the size of the $n$th layer. $Z_{n,i}$ denotes the size of the $n$th layer that has been fetched. Therefore, if the $n$th layer can be fetched $Z_{n,i}=Y_n$; otherwise $Z_{n,i}=0$.

%The size of a chunk delivered up to layer $n$ is $X_{i,n}=\sum_{l=0}^n Z_{n,i}$.
%$Y_i$ $\in$ ${\mathcal Y}$, Where  ${\mathcal Y}$ is a set of all possible chunk sizes including 0.
 Let $z_n^{(1)}(i,j)$ be the size of layer $n$ of chunk $i$ fetched over the first link (e.g., LTE link) at time slot $j$, and $z_n^{(2)}(i,j)$ be the size of  the layer $n$ of chunk $i$ over the second link (e.g., WiFi link) at time slot $j$. Moreover, let $Z_{n,i}^{(k)}$ be the total size that is fetched for the layer $n$ of chunk $i$ over the link $k$. \ie $Z_{n,i}^{(k)}=\sum_{j=1}^{(i-1)L+s} z_n^{(k)}(i,j)$. Let $B^{(k)}(j)$ be the available bandwidth over the  link $k\in \{1,2\}$ at time $j$ and $s$ be the startup delay. As mentioned, for the time being we assume the bandwidth can be perfectly predicted and we will relax this assumption in Section \ref{sec:bw_err}. We assume all time units are discrete and the discretization time unit is assumed to be 1 second. Finally, we define the decision variable of layer $n$, chunk $i$ and link $k$ ($I_{n,i}^{(k)}$) as follows:
 
 \begin{equation}\label{equ:fetchingPolicy}
\left\{\begin{array}{l}
I_{n,i}^k=1, \text{ if the $n$-th layer of chunk $i$ is fetched by link $k$}\\
I_{n,i}^k=0,  \text{ otherwise}\\
\end{array}\right.
\end{equation}
%\begin{equation}
%n \in\{0....N\} \nonumber
%\end{equation}

We assume that link 1 can be used as much as its bandwidth allows. However, we assume that link 2 can only help in fetching up to the layer $n_2 \leq N$ if link 1 can't meet the deadline. Note that if $n_2=N$, this is a special case in which both links can be used equally likely to fetch all layers without any preference. In the other extreme, when $n_2=0$, link 2 can be used only to avoid skips if link 1 fails to do so. %We will show later that our proposed algorithm can acheive the optimal solution when one of link 2 can only help in avoiding skips. 

We start by assuming that the decision is taken at the application layer where a layer of a chunk  cannot be split over the two paths (or links). It must be fully downloaded over either of the paths.  In other words, $Z_{n,i}^{(1)} \cdot Z_{n,i}^{(2)}=0$ for all $n$ and $i$. The key objectives of the problem are \emph{(i)} minimization of  the number of skipped chunks, \emph{(ii)} maximization of the average playback rate of the video,  and \emph{(iii)} minimization of  the quality changes between the neighboring chunks to ensure that perceived quality is smooth.

In order to account for these objectives and respect the priority order, we maximize a weighted sum of layer decision variables of the two links, where lower layers and more preferable links are given higher weights. We introduce weights $\lambda_n^{k}$ where $n$ is the layer index and $k$ is the link index. The weights need to satisfy the following condition.

 \begin{equation}
 \lambda_a^{(1)} > C \cdot \Big( \sum_{n=a}^{N}\lambda_n^{(2)}+ \sum_{n=a+1}^{N}\lambda_n^{(1)}\Big) .%C \cdot \sum_{n=a+1}^{N}\lambda_n^{u^\prime}Y_n\Big)
 %&\lambda_a^{(k)} Y_{a} > \Big(C \cdot \sum_{n=a+1}^{N}\lambda_{n}^{u} Y_n+\sum_{n=0}^{N}\lambda_n^{u}\sum_{i=1}^{C} Y_n \nonumber \\&+\sum_{n=a+1}^{N}\lambda_n^{u^\prime}\sum_{i=1}^{C} Y_n\Big)
 \label{basic_gamma_1_0}
\end{equation}
 \begin{equation}
 \lambda_a^{(2)}  > C \cdot \Big( \sum_{n=a+1}^{N}\lambda_n^{(2)}Y_n\Big) .%C \cdot \sum_{n=a+1}^{N}\lambda_n^{u^\prime}Y_n\Big)
 %&\lambda_a^{(k)} Y_{a} > \Big(C \cdot \sum_{n=a+1}^{N}\lambda_{n}^{u} Y_n+\sum_{n=0}^{N}\lambda_n^{u}\sum_{i=1}^{C} Y_n \nonumber \\&+\sum_{n=a+1}^{N}\lambda_n^{u^\prime}\sum_{i=1}^{C} Y_n\Big)
 \label{basic_gamma_1_1}
\end{equation}

%\begin{equation}
 %\lambda_a^1 G_a >  C \cdot \sum_{n=a+1}^N \lambda_n^1 (Y_n+G_n) 
 %\label{basic_gamma_1_1}
%\end{equation}

The choice of $\lambda$'s that satisfies Equation \ref{basic_gamma_1_0} and \ref{basic_gamma_1_1} implies two requirements. First, it implies that, for any layer $a$, the layers higher than  $a$ have lower utility than a chunk at layer $a$. In the case when $a=0$, the choice of $\lambda$ implies that all the enhancement layers achieve less utility than one chunk at the base layer. The use of $\lambda$ helps in giving higher priority to pushing more chunks to the $n^{th}$ layer quality over fetching some at higher quality at the cost of dropping the quality of other chunks to below the $n^{th}$ layer quality. Second, the choice of $\lambda$ implies that the highest utility that can be achieved for every layer is when it is fetched over link 1. This will obey the priority order of the links, and it will not use a less preferable link to fetch a layer that can be fetched by a more preferable link.

\begin{eqnarray}
\textbf{Maximize: } \sum_{i=1}^C\sum_{n=0}^{N}\Big(\lambda_n^1I_{n,i}^{(1)}+\lambda_n^2I_{n,i}^{(2)}\Big) %- \lambda \sum_{j=1}^{deadline(C)} B(j)^{(2)}  \Bigg)
 %\sum_{n=0}^{N} \lambda_n \gamma^n \sum_{i=1}^{C}\beta^i B_{n,i}^{(2)}  \Bigg)
\label{equ:eq1}
\end{eqnarray}
%\vspace{-.1in}
subject to
\begin{eqnarray}
\sum_{j=1}^{(i-1)L+s} z_n^{(k)}(i,j) = Z_{n,i}^{(k)}\quad  \forall i,  n, k \in \{1,2\} 
\label{equ:c0eq1}
\end{eqnarray}
%\begin{eqnarray}
%\sum_{j=1}^{(i-1)L+s} z_0^{(2)}(i,j)+z_0^{(1)}(i,j) = Z_{0,i}   \forall i 
%\label{equ:c1eq1}
%\end{eqnarray}
%%\vspace{-.1in}
\begin{eqnarray}
I_{n,i}^{(1)}\cdot Z_{n,i}^{(1)}+I_{n,i}^{(2)}\cdot Z_{n,i}^{(2)}= Z_{n,i}\quad  \forall i,  n 
\label{equ:c1eq1}
\end{eqnarray}
%\begin{eqnarray}
%\Gamma_{n,i}={\bf 1}\Big(Z_{n,i}^{(1)}>0\Big)\cdot G_n \quad  \forall i,  n 
%\label{equ:c2eq1}
%\end{eqnarray}
%\vspace{-.1in}
\begin{eqnarray}
Z_{n,i}\le \frac{Y_n}{Y_{n-1}}Z_{n-1,i}\quad  \forall i,  n>0 
\label{equ:c2eq1}
\end{eqnarray}
%\vspace{-.1in}
\begin{eqnarray}
\sum_{n=0}^N\sum_{i=1}^{C} z_n^{(k)}(i,j)  \leq B^{(k)}(j) \  \   \forall k\in\{1,2\}, \forall j,
\label{equ:c3eq1}
\end{eqnarray}
%\vspace{-.1in}
\begin{equation}
%\left(\sum_{j=1}^{(i-1)L+s} z_n^{(1)}(i,j)\right)\left(\sum_{j=1}^{(i-1)L+s} z_n^{(2)}(i,j)\right) = 0\   \forall i, n
Z_{n,i}^{(1)} \cdot Z_{n,i}^{(2)}=0\ \forall i, n
\label{equ:c4eq1}
\end{equation}
%\vspace{-.1in}
\begin{equation}
z_n^{(k)}(i,j) \geq 0\   \forall k\in\{1,2\}, \forall i 
\label{equ:c5eq1}
\end{equation}
\begin{equation}
z_n^{(2)}(i,j) = 0\   \forall n > n_2, \forall i 
\label{equ:c6eq1}
\end{equation}
%\vspace{-.1in}
\begin{equation}
z_n^{(k)}(i,j)= 0\   \forall \{i: (i-1)L+s < j\}, k\in \{1,2\}
\label{equ:c7eq1}
\end{equation}
%\vspace{-.1in}
\begin{equation}
Z_{n,i} \in {\mathcal Z}_n\triangleq \{0, Y_n\} \quad  \forall i, n
\label{equ:c8eq1}
\end{equation}
\begin{equation}
I_{n,i}^{(k)} \in \{0, 1\} \quad  \forall i, n,k
\label{equ:c9eq1}
\end{equation}
\vspace{-.2in}
\begin{eqnarray}
\text{Variables:}&& z_n^{(2)}(i,j), z_n^{(1)}(i,j),  Z_{n,i} \ \ \  \forall   i = 1, \cdots, C, \nonumber \\&& j = 1, \cdots, (C-1)L+s, n = 0, \cdots, N \nonumber
\end{eqnarray}
%We consider 2 scenarios, Multi-Path TCP, and application layer for link-layer assignment. In the second scenario (Application layer), the following constraint is added to the optimization problem (\ref{equ:eq1}-\ref{equ:c7eq1}). The constraint is added to guarantee that a layer can't be split over two links.
%\begin{equation}
%z_n^{(2)}(i,j)*z_n^{(1)}(i,j) = 0\   \forall i = 1, \cdots, C, and \forall j = 1, \cdots, deadline(C)
%\label{equ:c5eq1}
%\end{equation}

In the above formulation, constraints \eqref{equ:c0eq1} ensures that the total amount fetched by link $k$ over all times to be equal to $Z_{n,i}^{(k)}$ . Constraints  \eqref{equ:c1eq1} and \eqref{equ:c8eq1} ensure that what is fetched for layer $n$ of chunk $i$ over all links and times to be either zero or $Y_n$. The constraint  \eqref{equ:c2eq1} enforces not to fetch the $n$th layer of any chunk if the lower layer is not fetched.
\eqref{equ:c3eq1}  imposes the bandwidth constraint of the two links  at each time slot $j$.
Constraint \eqref{equ:c4eq1} enforces a layer of a chunk to be fetched only over one of the paths. Constraint \eqref{equ:c5eq1} imposes the non-negativity of the download of a chunk. Constraint \eqref{equ:c6eq1} enforces that link 2 can't be used to fetch layers higher than the layer $n_2$. \eqref{equ:c7eq1} imposes the deadline constraint since chunk $i\in \{1, \cdots, C\}$ cannot be fetched after its deadline ($deadline(i)=(i-1)L+s$). Recall that $s$  is the initial startup delay. Constraint\eqref{equ:c8eq1} enforces that a layer is either completely fetched or completely skipped (No partial fetching of the layers). Finally, constraint \eqref{equ:c9eq1} enforces the decision variable of the $n$-th layer, $i$-th chunk and $k$-th link to be either $0$ or $1$.

\subsection{Structure of the Proposed Problem}

The problem defined in~\S\ref{sec:problem} has integer constraints and a non-convex constraint (in \eqref{equ:c5eq1}). Integer-constrained  problems are in the class of discrete optimization \cite{nemhauser1988integer}. Some of the problems in this class are the Knapsack problem, Cutting stock problem, Bin packing problem, and Traveling salesman problem. These problems are all known to be NP hard. Very limited problems in this class of discrete optimization are known to be solvable in polynomial time, some typical examples being shortest path
trees, flows and circulations, spanning trees, matching, and matroid
problems. The well known Knapsack problem optimizes a linear function with a single linear constraint ( for integer variables), and is known to be NP hard. The optimization problem defined in this paper has multiple constraints (including non-convex constraints), and does not lie in any class of known  problems that are polynomially-time solvable to the best of our knowledge. 

We note that optimization based formulations have been proposed earlier for video streaming \cite{MPC,6115763,Ahmedin:2014:ESV:2741897.2741942,FENG2017}. However,  the optimality of these algorithms have not been considered due to integer constraints. The non-convex constraint in \eqref{equ:c5eq1} further increases the problem complexity beyond the proposed ILP formulations in \cite{6115763,Ahmedin:2014:ESV:2741897.2741942}.   In this paper, we propose a set of polynomial complexity algorithms to solve the proposed optimization problem, and we show that optimal solution of the proposed algorithm can be achieved for some special and practical assumptions. In particular we show that optimal solution can be achieved when one of the links can only be used to avoid skips/stalls.
%\feng{Also need to discuss why designing the algorithm for SVC is challenging.}
%\vspace{-.1in}
\section{MultiPath SVC Streaming Algorithms}\label{skipalgo}
 In this section, we describe the proposed algorithm for skip-based streaming. To develop the algorithm, we first consider the offline algorithm in which the bandwidth is perfectly known for the whole period of the video. We first describe the algorithm when both links can be used without preference, \ie  $\lambda_n^1=\lambda_n^2, \forall n$. We refer to this algorithm as ``Offline Multi-Path SVC Algorithm" (Offline MP-SVC). This algorithm is, then, extended to the Preference-Aware case in which link 1 is more preferable and link 2 can only be used to fetch up to a certain layer. We refer to this algorithm as Offline Pref-MP-SVC. Moreover, we propose ``Avoid-Skips MP-SVC'', an efficient algorithm for the special case of link preference in which link 2 can only be used to avoid skips (\ie, $n_2=0$) , and we show that ``Avoid-Skips MP-SVC'' (\ie, $n_2=0$) achieves the optimal solution of the proposed algorithm.  Finally, we propose an online algorithm (Online Pref/No-Pref MP-SVC) in which more practical assumptions are considered, such as short bandwidth prediction with prediction error and finite buffer size.
%In this section, we consider the case where $m=N$. We will first ignore $D$, by having $D=0$ and then explain the modification step accounting for the cost of the link $1$. Intuitively, based on Lemma\ref{lem_lower_unc}, obtaining the chunk qualities does not depend on $D$. Thus, we will first determine the qualities of the different chunks. Then, based on Lemma 
% \textcolor{red}{In the description of the algorithm, we do not explain the intuition behind starting from back. I think, we need to intuitively explain that.}
\subsection{MP-SVC}\label{skipalgoEqual}
 In this section, we describe our algorithm (Offline MP-SVC), which is summarized in
Algorithm \ref{algo:mpsvc}. The algorithm initially calculates the cumulative bandwidth of every second $j$ and link $R^{(k)}(j)$ (Line 5). Then, it makes the decision for the base layer, \ie which chunks to be skipped and which to be fetched. The algorithm performs forward scan and finds the maximum number of base layers that can be fetched before the deadline of every chunk $i$ ($V_{0,i}$). The maximum number of base layers that can be fetched before the deadline of the $i^{th}$ chunk is: $V_{0,i}=\sum_{k=1}^2 \floor{\frac{R^k(deadline(i)}{r_0}}$(Line 6). Let $skip(i)$ be the total number of skips before the deadline of the chunk numbered $i$. Therefore, if $V_{0,i}$ is less than $i-skip(i-1)$ at the deadline of the $i^{th}$ chunk, there must be a skip/skips, and the total number of skips from the start until the deadline of chunk $i$ will be equal to $skip(i)=skip(i-1)+1$(lines 9-11). If there are $A$ skips, the algorithm will always skip the first $A$ chunks since they are the closest to their deadlines. Thus, skipping them will result in a bandwidth that can be used by all of the remaining chunks to increase their quality to the next layer. This choice maximizes the total available bandwidth for the later chunks. %Note that the number of skips could increase if the maximum contribution of all users is used before downloading all chunks.
Before we describe the second step into details, we define some parameters. Let $\alpha_i^n(k)$ be the amount of bandwidth used to fetch the layer $n$ of chunk $i$ by link $k$ before the deadline of chunk numbered $i-1$. Let $\zeta_i^n(k)$ be the cost of fetching the layer $n$ of chunk $i$ by link $k$, and $\zeta_i^n(k)$ can be found as follows:

\begin{equation}\label{equ:cost}
\left\{\begin{array}{l}
\zeta_{i,n}(k)=\alpha_i^n(k), \text{if $Y_n$ can be fetched by link $k$}\\
\zeta_{i,n}(k)=\infty,  \text{ otherwise}\\
\end{array}\right.
\end{equation}
%\end{tiny}
%Where $\phi=min(r^{(u)}(deadline(i)),\eta_u^k)$.

With $\zeta_i^n(k)$ being defined, we describe the second step of the algorithm. The algorithm performs backward scan per chunk starting from the first chunk that was decided to be fetched by calling Algorithm 2 (line 18). The backward scan simulates fetching every chunk $i$ starting from its deadline and by every link. The algorithm computes the the cost of fetching the base layer of chunk $i$ by every link $k$ ($\zeta_i^n(k)$). The link choice that minimizes the cost is chosen to fetch the base layer of chunk $i$. Note that the link over which the chunk $i$ will be fetched is the one that gives the maximum amount of total available bandwidth over all links before the deadline of chunk numbered $i-1$.  For example, consider that fetching the $i^{th}$ chunk by link 1 results in using  $x$ amount of the bandwidth before the deadline of $i-1^{th}$ chunk while fetching the $i^{th}$ chunk by link 2 results in using  $y$ amount of the bandwidth before the deadline of $i-1^{th}$ chunk. Then, the first link will be chosen to fetch the chunk $i$ if $x<y$. The objective is to free as much as possible of early bandwidth since it will help more chunks to fetch their higher layers because it comes before their deadlines. 

%Once the base layer decision is taken, the algorithm sequentially decides to fetch the enhancement layers. %in a similar fashion.

{\bf Enhancement layer modifications}: The algorithm proceeds in performing forward-backward scan per enhancement layer in order. The bandwidth is now modified to be the remaining bandwidth after excluding whatever has been reserved to fetch lower layers (Line 6 in Algorithm 2).  Moreover, we don't shift the deadline of a chunk if it can't receive its $n$-th layer before its deadline that was found by running the base layer forward scan; instead, we skip fetching the $n$-th layer of that chunk. \ie, this chunk is not a candidate to the $n$-th layer quality. Also note that $n^{th}$ layer for a chunk is not considered if its $n-1^{th}$ layer is not decided to be fetched (Lines 9 and 16 in Algorithm 1).

\begin{figure}
		%\vspace{-.15in}
    \scalebox{0.95}{
		\begin{minipage}{\linewidth}
			\begin{algorithm}[H]
			\small
				\begin{algorithmic}[1]
				\STATE {\bf Input:}  $\mathbf{Y}=\{Y_n\forall n\}$, $L$, $s$, $C$, $\mathbf{B}^{(k)}=\{B^{(k)}(j)\forall j\}$, $k=1, 2$. 
				\STATE {\bf Output:} $I^{(k)}_n$, $k=1, 2$: set containing the indices of the chunks that can have their $n$th layer fetched over link $k$.	
								
   \STATE  $deadline(i)=(i-1)L+s \quad \forall i$
  
%\STATE $y_n^{(k)}(i,j)=0, \forall n,i,j$	
    \FOR{each layer $n = 0, \cdots, N$}   
     \STATE $R^{(k)}(j)=\sum_{j^\prime=1}^{j} B^{(k)}(j^\prime)$,$\forall j,k$
     %\STATE  , $j=1,....deadline(C)$, $k=1, 2$
	\STATE $V_{n,i}=\sum_{k=1}^2\floor{\frac{R^{(k)}(deadline(i)}{r_n}}$, $\forall i$
    \STATE $skip(0)=0$
     \FOR{$i=1:C$}
      % \STATE {\bf if }{$V_{n,i} < i-skip$} {\bf then } $skip=i-V_{n,i}$
      \IF{$V_{n,i} < i-skip(i-1)$ or ($n \neq 0$ and $i \notin  I_{(n-1)}^{(1)}$ and $i \notin I_{(n-1)}^{(2)}$)}
      \STATE $skip(i)=skip(i-1)+1$
      \ENDIF
      \ENDFOR
        \STATE Skip the first ``$skip(C)$'' chunks.
        \STATE $i^\prime=skip(C)+1$: the index of the first chunk to fetch
      % \STATE $j_{1}=j_{2}=....=j_{U}=deadline(C)$
     
    %\FOR{$u=1:U$}
     %{\bf j not defined - need efficient loop - should be also outside $n$ loop. }
    % available bandwidth on lte connection up to time $j, \forall j$
    \FOR{$i=i^\prime:C$}
    \STATE $j_k =deadline(i),k=\in \{1, 2\}$
   %\STATE $r^k(j_k) =\min(r^k(j_k), \eta_u),\forall u=1....U$
    	\IF{($n=0$ or $i \in I_{n-1}^{(1)}$ or $i \in I_{n-1}^{(2)})$}
    		%\IF{$max(r^{(1)}(j_1),r^{(2)}(j_2)) \geq Y_n$}
			%\IF{$r^k(j_k) > Y_n$}
				\STATE ${B2}_{k}={B}_{k}, \forall k$, $t=deadline(i-1)$
      				\STATE $[B2_{k},\zeta_i^n(k)]=$ Backward$(k,i, j_k, \mathbf{B2}_{k}, Y_n, t) \forall k$
      				%\STATE $tmp(k)=R^{(k)}(deadline(i-1))+\sum_{k^\prime=1, k^\prime \neq k}^{2} r^{(k^\prime)}(deadline(i-1))$, $k=1,2$
			%\ELSE
				%\STATE $tmp(k)=0$
			%\ENDIF
			\STATE $k_1=\arg\min (\zeta_i^n)$
      			%\IF { $(r^{(k_1)}(deadline(i)) \geq Y_n)$}
      				\STATE $I_n^{(k_1)}=I_n^{(k_1)} \cup i$, $\mathbf{B}_{(k_1)}=B2_{(k_1)}$
      				%\STATE $r^{(k_1)} =R^{(k_1)}$, $Y_n=0$
				%\STATE $Y_n=0$				
      			%\ENDIF

      		%\ENDIF
	\ENDIF
      \ENDFOR
     \ENDFOR
   				\end{algorithmic}
				\caption{Offline MP-SVC Algorithm }\label{algo:mpsvc}
			\end{algorithm}
		\end{minipage}
	}
		%\vspace{-.55in}
	\end{figure}

\begin{figure}
		%\vspace{-.6in}
		\vspace{-.1in}
		\begin{minipage}{\linewidth}
			\begin{algorithm}[H]
				\small
				\begin{algorithmic}[1]
				\STATE {\bf Input:} $k,i, j, B2, Y_n, t$
					\STATE {\bf Output:} $\zeta_i^n(k)$the cost of fetching layer $n$ of chunk $i$ by link $k$, $B2$ is the residual bandwidth after fetching chunk $i$.
					
    \STATE {\bf Initialization:}
   $\zeta_k(i,n)=0$
     \WHILE {($Y_n > 0$)}
            \STATE $fetched=min(B2(j), Y_n)$
             \STATE $B2(j)=B2(j)-fetched$, $Y_n(i)=Y_n(i)-fetched$,
            \STATE {\bf if } {$(j \leq t)$} {\bf then }  $\zeta_i^n(k)=\zeta_i^n(k)+fetched$
	   \STATE {\bf if } {$(B2(j)=0)$} {\bf then }  $j=j-1$            	
	    \STATE {\bf if } {$j < 1$ and $Y_n > 0$} {\bf then }  $\zeta_i^n(k)=\infty$, break
\ENDWHILE
				\end{algorithmic}
				\caption { Backward Algorithm }
			\end{algorithm}
		\end{minipage}
		\vspace{-.2in}
	\end{figure}

{\bf Chunks Download: }During the actual fetching of the chunks. Each link fetches the layers in order of the chunks they belong to. In other words, the $n$-th layer of chunk $i$ is fetched before the $m$-th layer of chunk $j$ if $i < j$. Moreover, for the same chunk, the layers are fetched according to their order. For example, the base layer is the first layer to download since if it is not received by the chunk's deadline, the chunk will not be played. Moreover, non of the higher layer received can be decoded if the base layer is not received by the chunk's deadline.

{\bf Complexity Analysis}:  The algorithm sequentially decides  each layer one after the other, and thus it is enough to find the complexity of one layer to compute overall algorithm complexity. 
For each layer $n$, the algorithm first finds the cumulative bandwidth of every time slot $j$, $R^{(k)}(j)$ which is linear complexity. Then, it starts from the last chunk, and performs  the backward algorithm on each link at each time. The backward algorithm determines whether the chunk can be fetched in that link and the amount of residual bandwidth. Since the complexity of backward algorithm is linear in $C$, the overall complexity for a layer is $O(C^2)$. Thus, the overall complexity is $O(NC^2)$.

\begin{figure*}
%\vspace{-.15in}
\centering
\includegraphics[trim=0.6in 1.5in 0.6in 1.5in, clip,  width=\textwidth]{./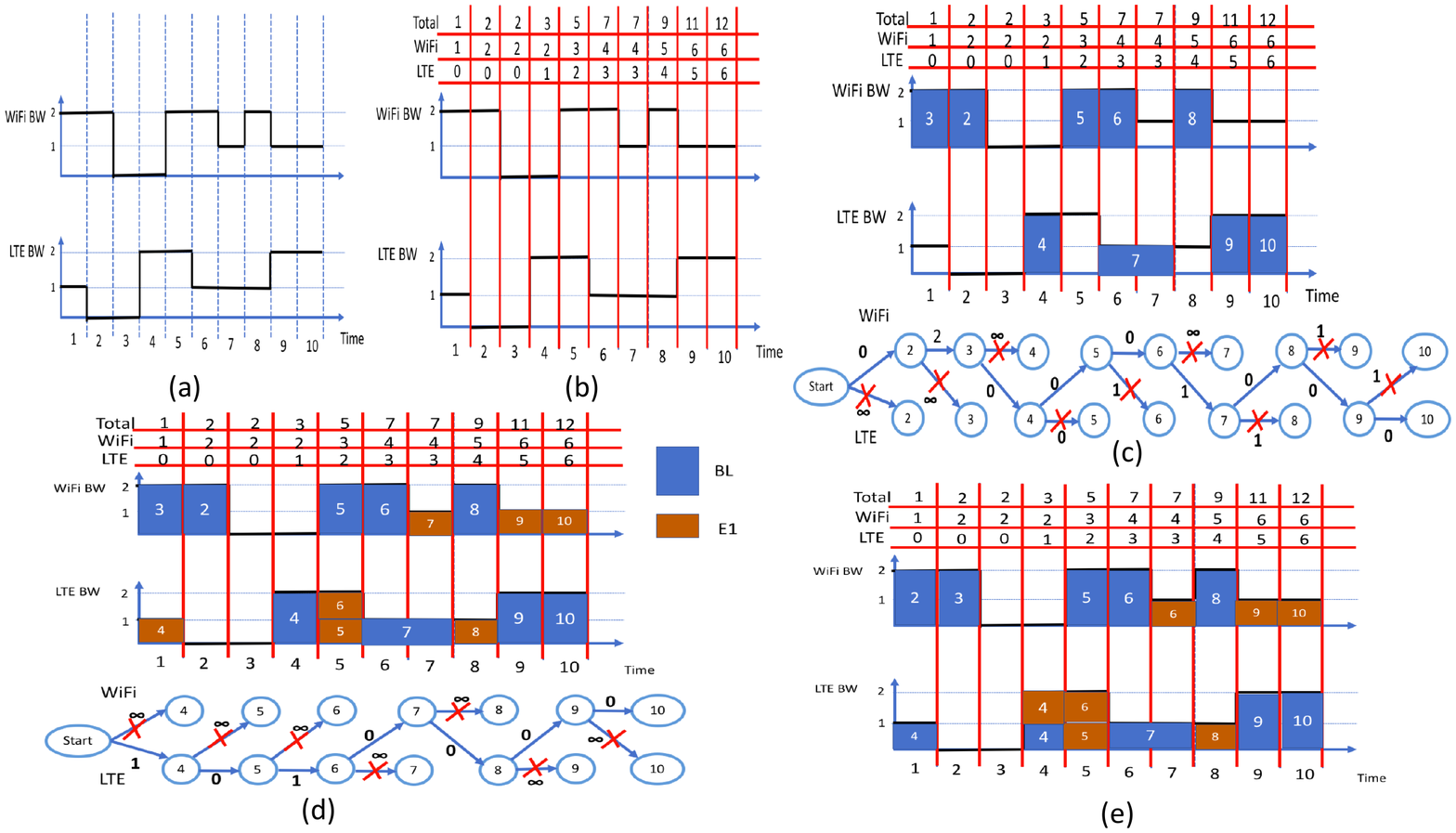}	
%\vspace{-.7in}
 \vspace{-.3in}
 \caption{ Illustration Example of MP-SVC Algorithm: (a) Bandwidth Trace, (b) Base Layers that can be fetched before the deadline of every chunk, (c) Base Layer Decision, (d) $1^{st}$ Enhancement Layer Decision, and (e) The Actual Fetching Policy}
 \label{fig:ex1}
% \vspace{-.1in}
\end{figure*}

{\bf Example: } Fig~\ref{fig:ex1} shows an example that illustrates how MP-SVC algorithm works. We assume a video that consists of 10 chunks, 1 second length each. The video is an SVC encoded into 1 Base Layer($BL$) and 1 Enhancement Layer($E_1$). The $BL$ and $E_1$ sizes are $2Mb$, and $1Mb$ respectively. \ie  $Y_0=2Mb$, and the $Y_1=1Mb$. Moreover, we assume that the startup delay is 1 second. Therefore, the $deadline(i)=i, \forall i$. 

We assume a mobile device with two links Link 1 is WifI link and link 2 is LTE link. Fig~\ref{fig:ex1}-a shows the bandwidth traces of the two paths (WiFi and LTE), and Fig~\ref{fig:ex1}-b show the result of the first forward scan. The forward algorithm finds the maximum number of base layers that can be fetched before the deadline of every chunk. We clearly see that up to the deadline of the $3$rd chunk, only 2 chunks can be fetched. Therefore, one out of the first 3 chunks should be skipped. The algorithm as explained previously decides to skip the first chunk (chunk 1) since the bandwidth that the first chunk leaves can be available to all of the remaining chunks for the next layer decisions. In other words, if we skip chunk 3, then it is possible that part or the whole bandwidth that it leaves comes after the deadline of chunks 1 and 2 which means that chunks 1 and 2 can't benefit from this bandwidth in fetching their higher layers.

Forward scan finds the chunks that can have their base layers fetched without violating their deadlines. Consequently, Backward scan described in Fig~\ref{fig:ex1}-c finds the fetching policy such that all these chunks have their base layers fetched without violating their deadlines as promised by forward algorithm. Moreover, backward algorithm finds the policy that maximizes the total bandwidth of every chunk for the next layer decision. Before, describing Fig~\ref{fig:ex1}-c, let's denote the cost of fetching a chunk $i$ over link $k$ by $c_i^k$. Moreover $c_i^k$ is equal to the size of the portion of chunk $i$ that is fetched before the deadline of chunk $i-1$. Therefore, the link that has less $c_i^k$ is chosen to fetch chunk $i$. Note that $k=1$ represents the WiFi link

 As shown in Fig~\ref{fig:ex1}-c, the algorithm simulates fetching every chunk starting from its deadline back. Moreover, it does not consider chunk 1 since forward algorithm decided that chunk 1 can't be fetched. The algorithm proceeds in the order of the chunks, finds $c_i^k, k \in \{1,2\}$, and decides which link should fetch which chunk.  For example, chunk  $4$ can totally be fetched after the deadline of chunk 3 over LTE link, so $c_4^2=0$. In the other hand, it can't even be fetched over the WiFi chunk unless an earlier chunk is skipped, so $c_4^1=\infty$. Therefore, chunk $4$ is assigned to the LTE link.
 
 Fig~\ref{fig:ex1}-d shows how the algorithm repeats the same process described in Fig~\ref{fig:ex1}-c but for the $1^{st}$ enhancement layer decisions. In the $E1$ decisions, the algorithm uses the remaining bandwidth of each link after excluding whatever reserved for fetching the $BL$s. Moreover, since the $BL$ of the first chunk is not decided to be fetched, its $E1$ is not considered. Finally, Fig~\ref{fig:ex1}-e shows the actual fetching of the chunks. As shown in the figure, each link fetches the layers in order of the chunks they belong to, and for the same chunk, the layers are fetched according to their order. For example, In LTE link, $E1$ of the $4$-th chunk is fetched before $E1$ of the $6$-th chunk.

\begin{remark}\label{rem_mptcp}
We note that if one can use Multi-path TCP, the above algorithm can be used by assigning the sum of the bandwidths of the two links to one path and having the bandwidth of the other path as zero. 
\end{remark}

\begin{remark}\label{rem_mptcp}
The decision of combined MP-SVC and MPTCP can also be used for AVC encoded videos as well since after the decisions are taken, chunks can be fetched in order. Therefore, in this case, the decision variable $Z_{n,i}$ represent the size difference between the $n$-th and $(n-1)$-th quality levels of chunk $i$. 
\end{remark}

\begin{remark} 
The algorithm can be easily  extended to the case when there are more than 2 links where  each layer of a chunk is assigned to the link that leaves more total residual bandwidth to the earlier chunks.
\end{remark}

\begin{lemma}%{Theorem}
Given size decisions up to $n-1^{th}$ layer ($Z_{0,i}, \cdots,Z_{n-1,i}$, for all $i$), remaining bandwidth, and $deadline(i)$ for every chunk $i$, the proposed algorithm achieves the minimum number of $n^{th}$ layer skips (or obtains the maximum number of chunks at layer $n$) as compared to any feasible algorithm which fetches the same layers of every chunk up to layer $n-1$.
 \label{them:skip1} 
\end{lemma}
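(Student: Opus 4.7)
My plan is to prove the lemma by matching a combinatorial lower bound on the number of layer-$n$ skips with the output produced by Algorithm~1. Throughout, I fix the layer-$(n-1)$ decisions so that $B^{(k)}(j)$ denotes the residual bandwidth on link $k$ at time $j$ after the lower layers are served, and I write $d_i = \text{deadline}(i)$.

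For the lower bound, I would first observe that since layer $n$ of chunk $i$ has fixed size $Y_n$ and must be delivered over a single link by $d_i$, any feasible set $F$ of chunks fetched at layer $n$ must satisfy
\begin{equation}
|F \cap \{1,\dots,i\}| \;\le\; V_{n,i} \quad \text{for every } i,
\end{equation}
because the bandwidth available on link $k$ by $d_i$ admits at most $\floor{R^{(k)}(d_i)/r_n}$ indivisible layer-$n$ pieces. Combined with the trivial bound $|F \cap \{i+1,\dots,C\}| \le C - i$, this yields $|F| \le V_{n,i} + (C-i)$, so the number of skipped chunks at layer $n$ is at least $\max_i (i - V_{n,i})^+$. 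No feasible algorithm (consistent with the given lower-layer decisions) can do better than this.

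Next I would show that the forward scan in Algorithm~1 attains this lower bound exactly. By a short induction on $i$, exploiting that $V_{n,\cdot}$ is non-decreasing in $i$ (since $R^{(k)}(\cdot)$ is non-decreasing), the counter satisfies $skip(i) = \max_{j \le i}(j - V_{n,j})^+$, so $skip(C)$ matches the lower bound. Moreover, skipping the first $S := skip(C)$ chunks (those with the tightest deadlines) leaves a total-capacity-feasible subproblem, because $i - S \le V_{n,i}$ for every $i$ by the definition of $S$.

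The main obstacle is converting total-capacity feasibility into an actual per-link schedule: the backward scan (Algorithm~2) must assign every non-skipped chunk to \emph{one} of the two links without splitting, although only the summed capacity enters $V_{n,i}$. I would handle this by induction on the chunk index, showing that the greedy rule of picking the link $k$ that minimizes $\zeta_{i,n}(k)$ always returns a finite cost on at least one link. If both returned $\infty$ at some $i$, then aggregating the chunks already committed to each link by $d_j$ for a suitable $j \le i$ would force strictly more than $V_{n,j}$ layer-$n$ pieces into the window $[0,d_j]$, contradicting $S \ge j - V_{n,j}$. A short exchange argument further shows that picking the link with smaller $\zeta_{i,n}(k)$ dominates any alternative in terms of residual bandwidth left before $d_{i-1}$, so the invariant is preserved as the scan proceeds and the backward assignment never incurs skips beyond $S$. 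Matching the lower bound with this feasible realization proves that Algorithm~1 attains the minimum number of layer-$n$ skips.
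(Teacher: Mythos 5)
Your proposal follows the same skeleton as the paper's proof---a ``skip-forcing'' lower bound matched by the forward scan---but it is more careful in two places, one of which closes a genuine gap in the paper's own argument. The paper argues only that whenever the forward scan skips, the aggregate two-link bandwidth up to some deadline $d_j$ is insufficient for all eligible chunks $1,\dots,j$, so any feasible competitor must also skip some chunk $i'\le j$; pairing these forced skips gives optimality. You sharpen this into the closed form $\max_i\,(i-V_{n,i})^+$ and check that the counter $skip(i)$ tracks it (the key point being that $V_{n,\cdot}$ is non-decreasing, so the running maximum grows by at most one per step). More importantly, you isolate the realizability question: $V_{n,i}$ is a sum of per-link floors, i.e.\ an aggregate capacity certificate, while constraint~\eqref{equ:c4eq1} forbids splitting a layer across links, so one must still exhibit a per-link assignment of the $C-skip(C)$ surviving layers meeting every deadline. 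The paper's proof never addresses this and tacitly assumes the backward scan succeeds. Your Hall-type induction is the right idea and the underlying claim is true: processing eligible chunks in deadline order and always placing a chunk on a link $k$ whose current count is below $\floor{R^{(k)}(deadline(i))/r_n}$ never gets stuck, since the counts sum to $n_i-1 < V_{n,i}$ where $n_i\le V_{n,i}$ is the number of eligible chunks up to $i$; the backward greedy with the $\zeta$-exchange argument realizes the same set while additionally optimizing residual early bandwidth (which is what Lemma~\ref{them:skip2} needs). The one loose end in your write-up is that $skip(C)$ also counts chunks whose $(n-1)$th layer was not fetched (line~9 of Algorithm~1); these skips are forced on every competitor sharing the lower-layer decisions, so the comparison survives, but your formula for $skip(C)$ holds only after restricting the prefix counts to eligible chunks. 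Net: same approach as the paper, executed with an extra step the paper needed but omitted.
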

\begin{proof}
Proof is provided in Appendix \ref{proof_skip}.
\end{proof}

\begin{lemma}%{Theorem}
Given size decisions up to $n-1^{th}$ layer ($Z_{0,i}, \cdots,Z_{n-1,i}$, for all $i$), running the proposed algorithm for the $n$-th layer decisions provides the maximum total remaining bandwidth for every chunk $i$ before its deadline for decisions of layers $> n$ as compared to any feasible algorithm which fetches the same layers of every chunk up to layer $n-1$.
 \label{them:skip2} 
\end{lemma}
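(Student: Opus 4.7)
The plan is to combine Lemma~\ref{them:skip1} with a two-stage exchange argument. By Lemma~\ref{them:skip1} the proposed algorithm fetches layer $n$ for the maximum set $S$ of chunks; I would restrict the comparison to any feasible algorithm $\mathcal{A}'$ that also fetches layer $n$ for exactly the set $S$ (algorithms fetching a strict subset are already handled by Lemma~\ref{them:skip1}). Given the same $S$, the goal reduces to showing that the specific schedule produced by the algorithm (who-fetches-what-and-when) minimizes total bandwidth consumed before $\text{deadline}(i)$ for every $i$, equivalently maximizes the total remaining bandwidth there.

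First I would establish \emph{within-link optimality of backward scanning}: fix any link assignment $(k_m)_{m\in S}$; then on each link $k$, the Backward subroutine (Algorithm~2) — which consumes link-$k$ bandwidth starting from $\text{deadline}(m)$ and moving toward the past — minimizes, over all feasible single-chunk schedules on link $k$ that complete chunk $m$ by $\text{deadline}(m)$, the bandwidth consumed at times $\le t$ for every $t\le \text{deadline}(m)$. This follows by a pairwise swap argument: if an alternative schedule uses link-$k$ bandwidth at some time $t_1<t_2\le \text{deadline}(m)$ while link-$k$ bandwidth at $t_2$ is not fully exhausted, swapping these two slots leaves the chunk completed by its deadline and strictly decreases consumption at every time in $[t_1,t_2)$.

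Next I would show \emph{link-choice optimality} by induction on the chunk index $m=1,\dots,C$. Assume the invariant ``total remaining bandwidth at times $\le \text{deadline}(i)$ is maximal for every $i$'' holds after the algorithm's decisions for chunks $1,\dots,m-1$. When the algorithm processes chunk $m$, it picks $k^\star=\arg\min_k\zeta_m^n(k)$ where, by Step~1, $\zeta_m^n(k)$ is the minimum amount of link-$k$ bandwidth used at times $\le\text{deadline}(m-1)$ over all feasible fetchings of chunk $m$ on link $k$. The critical observation is that the entire $Y_n$ of layer $n$ of chunk $m$ must be fetched by $\text{deadline}(m)$ on the chosen link, so chunk $m$'s contribution to total (two-link summed) consumption at times $\ge\text{deadline}(m)$ equals $Y_n$ regardless of link choice; the link choice only affects how much is consumed at times $\le \text{deadline}(m-1)$, and picking $k^\star$ minimizes exactly that. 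Consequently, rewriting $\mathcal{A}'$'s link assignment at chunk $m$ to $k^\star$ (and re-running backward scan, which by Step~1 is also what $\mathcal{A}'$ should be doing) cannot decrease the total remaining bandwidth before any deadline, preserving the induction hypothesis.

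The delicate point I expect to be the hardest is controlling the knock-on effect of a swap at chunk $m$ on later chunks $j>m$: the per-link distribution changes, so the backward scan for chunk $j$ on its currently assigned link could become infeasible even though the two-link total is preserved. To handle this I would process the exchanges in order of $m$ and, whenever a swap makes a later chunk's originally assigned link infeasible, reassign that chunk to the link that the algorithm itself would have chosen from the new residual bandwidths. Because the exchange only redistributes bandwidth between the two links at times $\le \text{deadline}(m)$ while keeping the pointwise total unchanged there and for all later times, any chunk that was feasible in $\mathcal{A}'$ remains feasible after the rewrite, so the induction closes and the lemma follows.
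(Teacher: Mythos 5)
Your proposal and the paper's proof address largely complementary halves of the statement, and the half you tackle contains a genuine gap. First, your opening reduction is not justified: you restrict attention to competitors that fetch layer $n$ for exactly the set $S$ chosen by the algorithm, arguing that Lemma~\ref{them:skip1} disposes of the rest. But Lemma~\ref{them:skip1} only bounds the \emph{number} of $n$th-layer skips; a feasible competitor may fetch a different set $S'$ of the same cardinality, and this is precisely the case the paper's own (very short) proof handles, by showing that the algorithm's skip indices dominate those of any competitor ($i_k \le j_k$ for the ordered skip lists), so that the algorithm frees bandwidth lying before the deadlines of at least as many chunks. Your proof never engages with this case, while the paper's proof never engages with the scheduling and link-assignment questions you raise, so neither argument alone covers the lemma.

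Second, the exchange step at the end of your argument does not go through as stated. You justify feasibility preservation by noting that a swap at chunk $m$ "keeps the pointwise total unchanged" across the two links. However, feasibility of a later chunk $j$ is a \emph{per-link} property: by constraint~\eqref{equ:c4eq1} the entire $Y_n$ must fit on a single link before $deadline(j)$. Redistributing consumption between the links while preserving the sum can leave each link with strictly less than $Y_n$ of residual before $deadline(j)$ even though the total residual is unchanged, so "any chunk that was feasible in $\mathcal{A}'$ remains feasible after the rewrite" is exactly the assertion that needs proof, not a consequence of the pointwise-total invariance. Relatedly, your link-choice step shows that picking $k^\star=\arg\min_k \zeta_m^n(k)$ minimizes consumption before $deadline(m-1)$, but the lemma requires maximal remaining bandwidth before $deadline(i)$ for \emph{every} $i$, and minimizing the cost measured at $deadline(m-1)$ does not by itself imply minimizing the consumption before every earlier deadline (the two links' residual profiles can cross). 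The within-link backward-scan swap argument in your first stage is sound and is a useful piece the paper leaves implicit, but the two issues above mean the proof as written does not establish the lemma.
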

\begin{proof}
Proof is provided in Appendix \ref{proof_skip}.
\end{proof}

{\bf MPTCP-SVC: } We conclude this section by mentioning that the MPTCP variant of our problem in which the bandwidth of the two links is aggregated to create a virtual link 1 and have the other link's bandwidth zero all the time is a special case of the proposed problem in \cite{AnisSinglePath}. In particular it is the case in which $\beta=1$ and the buffer is infinite. Moreover, our MPTCP algorithm (MPTCP-SVC) can easily be shown that it does not perform worse than DBPBP algorithm which means that the MPTCP-SVC achieves the optimal solution of the proposed algorithm. We omit the proofs of the optimality of MPTCP-SVC since this paper does not assume that a layer can be split over multiple links. We just use MPTCP-SVC for the sake of performance comparison.

\subsection{Pref-MP-SVC}\label{sec:pref}

In this section, we consider a preference to the use of link $1$ as compared to link $2$. We first assume that there is a parameter $n_2\in \{0, \cdots, N\}$ indicating that the link $2$ should not be used to obtain chunks above layer $n_2$. This will allow using link $2$ to help getting lower layers while not over-using it to achieve higher layers. Further, among different schemes that obtains same number of chunks till layer $n2$, we wish to minimize the usage of link $2$. For $n_2=0$, this implies that link $2$ is only used to avoid skips and not to fetch any enhancement layers. Further, link $2$ is used only if necessary to reduce skips, when link $1$ is not sufficient to obtain the optimal number of base layer chunks. %The modifications to the no-skip version are similar to that discussed in Section \ref{no_skip}.

%\subsection{Skip Based Streaming: Offline Problem Formulation}
%
%We first assume that the future bandwidth is perfectly known beforehand and give a formulation for skip based video streaming. The formulation in Section \ref{sec:problem} can be modified based on preference of link $2$ as follows. 
%
%\begin{eqnarray}
%maximize \sum_{n=0}^{N}\sum_{i=1}^{C} \left(\gamma^nZ_{n,i} - \gamma^{N+1}\sum_{j=1}^{(i-1)L+s}z_n^{(1)}(i,j) \right)
%\label{equ:eq_pref}
%\end{eqnarray}
%\vspace{-.1in}
%subject to \eqref{equ:c2eq1}-\eqref{equ:c7eq1}, 
%\begin{equation}
%z_n^{(1)}(i,j) = 0 \   \forall i,j, \text{ and } n>m
%\label{equ:eeq1}
%\end{equation}
%\vspace{-.2in}
%\begin{eqnarray}
%\text{Variables:}&& z_n^{(2)}(i,j), z_n^{(1)}(i,j),  Z_{n,i} \ \ \  \forall   i = 1, \cdots, C, \nonumber \\&& j = 1, \cdots, (C-1)L+s, n = 0, \cdots, N \nonumber
%\end{eqnarray}
%
%In this formulation, we penalize the cost of using link $1$. Thus, for the same quality decisions, using less link $1$ is preferable. Further, the particular choice of the cost function implies that the quality must not be impacted if link $1$ can help improve fetching more chunks at lower layers. For getting a chunk using link $1$ at layer $d\le N$, the utility of $\gamma^d r_d$ is strictly larger than the negative cost of $\gamma^{N+1} r_d$ since $\gamma<1$.
%
%\subsection{Proposed Algorithm}

The Pref-MP-SVC algorithm is described by Algorithm \ref{pref-algo}. In the first step, we use the MP-SVC algorithm for layers $n=0$ to $n_2$. Then, we wish to minimize the usage of Link $2$ such that the decisions till layer $n_2$ remain the same. In order to do that, we re-run MP-SVC, but we consider only link 1 (link 2 bandwidth is zero) and chunks that were initially decided to be fetched by link 2. We exclude the bandwidth that was reserved to fetch chunks over link 1 from the previous run of MP-SVC. The main objective of the $2^{nd}$ run of MP-SVC is to reduce the usage of link 2 to its minimum but with ensuring that maximum number of chunks can be fetched at least at $n_2$ layer quality. 
%Consequently, if some layers can't be moved to link 1, the algorithm checks the possibility of exchanging them with earliest possible ones fetched by link 1. This step is summarized in the Exchange algorithm listed in Algorithm \ref{pref-algo2}. The algorithm proceeds by checking the possibility for exchange in the order of the layers. Intuitively, moving earlier chunks to less preferable link (link 2) will offer more bandwidth for fetching higher layers of later chunks by more preferable link (link 1). 

\begin{figure}
	\vspace{-.1in}
 \scalebox{0.95}{	\begin{minipage}{\linewidth}
		\begin{algorithm}[H]
			\small
			\begin{algorithmic}[1]
					%\vspace{-.05in}
					\STATE {\bf Input:}  $\mathbf{Y}=\{Y_{n,i}\forall n,i\}$, $L$, $s$, $C$, $n_2$, $\mathbf{B}_{k}$, $k=1,2$. 
				\STATE {\bf Output:} $I_n^{(k)}$, $k=1,2$: set containing the indices of the chunks that can have their $n$th layer fetched by link $k$.	
%\STATE $\mathbf{BB}_{k}=\mathbf{B}_{k}, k=1,2$
				\STATE $[I_n^{(k)},\mathbf{B}_{k}, n=0\cdots n_2]=$offline MP-SVC($\mathbf{Y}$, $L$, $s$, $\mathbf{B}_{k}$)%, $k=1....lowest$, $u=1,..,U_{k}, \forall k=1...lowest$
				%\STATE $chunksLow=$chunks that have layers which are assigned to users belong set $lowest$ to fetch
				%\FOR{$i=1:C$}
				%	\IF{$i \in I_n^{(1)}\}$ or ($i \notin I_n^{(1)}\}$ and $i \notin I_n^{(2)}\}$)}
				%		\STATE $Y2_n(i)=0$
				%		\ELSE
				%		\STATE $Y2_n(i)=Y_n$
				%		\ENDIF
				%\ENDFOR
				\STATE $\mathbf{Y2}=\{Y_{n,i}\forall n,i \in I_n^{(2)}\}$		
				
				%\STATE $Y2_n=0$ for every chunk $i \in I_n^{(1)}\}$
				%\STATE $[I^{(u,k)}_{n,2}\forall u,n]=$offline No-Pref GroupCast($\{Y_n\forall n=p...m\}$)
				\STATE $[I_n^{(1)},\mathbf{B}_{1}, n=0\cdots n_2]=$offline MP-SVC($\mathbf{Y2}$, $L$, $s$, $\mathbf{B}_{1}$)
				
				\FOR{$i=1:C$}
					\IF{$(i \in I_n^{(1)}\}$ and $i \in I_n^{(2)}\})$}
						\STATE $I_n^{(2)}=I_n^{(2)}-\{i\}$
				%		\ELSE
				%		\STATE $Y2_n(i)=Y_n$
						\ENDIF
				\ENDFOR
				%\STATE $[I_n^{(k)}, \mathbf{BB}_{k}\forall k]$=Exchange$(I_n^{(k)}, \mathbf{BB}_{k}\forall k,Y, deadline)$
								%\STATE $\mathbf{B}_{1}=\mathbf{BB}_{1}$	
				
			\STATE $[I_n^{(1)},\mathbf{B}_{1}, n=n_2+1\cdots N]=$offline MP-SVC($\mathbf{Y_n}, n>n_2$, $L$, $s$, $\mathbf{B}_{1}$)
				%\STATE [B,I]=Move($C$, $\mathbf{B}^{(k)}$, $\mathbf{Y}$, $I^{(u)}_n$, $k=1,...,U$, $n=0, \cdots, m$)
				%\STATE Run offline No-Pref GroupCast for all the remaining layers using only users belong to set $1$.
			\end{algorithmic}
			\caption{Offline Pref MP-SVC Algorithm }
			\label{pref-algo}
		\end{algorithm}
	\end{minipage}
}
%	\vspace{-.2in}
\end{figure}

%--------------------------------------------------

\if0
\begin{figure}

	\vspace{-.1in}
 \scalebox{0.95}{	\begin{minipage}{\linewidth}
 
		\begin{algorithm}[H]
			\small
			\begin{algorithmic}[1]
			
			\STATE {\bf Input:}  $I^{(u,k)}_n,\mathbf{B}_{u}^{(k)}, Y, deadline$
			
			\STATE {\bf Output:} $I^{(u,k)}_{n},\mathbf{B}_{u}^{(k)}$
				
			\FOR{$(n=0,\cdots n_2)$}
				\STATE $i_1^{(1)}$: the index of the first layer n to fetch by link1
				\STATE $i_2^{(1)}$: the index of the last layer n to fetch by link1
				\STATE $i_1^{(2)}$: the index of the first layer n to fetch by link2
				\STATE $i_2^{(2)}$: the index of the last layer n to fetch by link2
				\FOR{$(i_1=i_1^{(1)}:i_1^{(2)})$}
				\FOR{$(i_2=i_2^{(2)}:i_1^{(2)})$}
				       \IF{$(i_2 \leq i_1)$}
						\STATE break
					\ENDIF
					\IF{($i1 \in I_n^{(1)}$ and $i2 \in I_n^{(2)}$)}
						\STATE $\mathbf{BB}_{k}=\mathbf{B}_{k}$
						\STATE simulate fetching layers in $I_n^{(1)}+\{i_2\}-\{i_1\}$ by link 1 and $I_n^{(2)}+\{i_1\}-\{i_2\}$ by link 2 using bandwidth profiles $\mathbf{BB}_{k}, k=1,2$
						\IF{all n layers are fetched}
							\STATE $I_n^{(1)}=I_n^{(1)}+\{i_2\}-\{i_1\}$
							\STATE $I_n^{(2)}=I_n^{(2)}+\{i_1\}-\{i_2\}$
							\STATE $\mathbf{B}_{k}=\mathbf{BB}_{k}$
						
						\ENDIF

					\ENDIF

					\ENDFOR
					\ENDFOR
					\ENDFOR

							\end{algorithmic}
			\caption{Exchange Algorithm }
			\label{pref-algo2}
		\end{algorithm}
		
	\end{minipage}
}

\end{figure}
\fi

\subsection{Avoid-Skips MP-SVC}\label{sec:pref}

A special case link preference scenario is when link 2 can only be used to avoid skips. \ie  link 2 can only help in fetching base layers. We propose an efficient algorithm to solve the problem in this specific case, and we show that the proposed algorithm is optimal. We name this algorithm ``{\bf Avoid Skips MP-SVC}''. In fact ``Avoid-Skips MP-SVC'' modifies the MP-SVC decisions to account for the facts that link 2 usage should be at its minimum, link 2 can't be used to fetch beyond the base layer, the available bandwidth for next layer decision over link 1 is maximized.

Avoid Skips MP-SVC works as follows: In the first step, the algorithm finds the minimum number of base layer skips similar to the MP-SVC algorithm. Consequently, the algorithm initially assumes that all non skipped chunks can be fetched by link 1, and only assign base layers to link 2 if a deadline of a chunk $i$ is reached and $V_{1,i} < i-skip(i)$. In this case, the algorithm moves to the less preferable link (link 2) the earliest base layers that can be fetched by this link (lines 16-24). Moving the earliest possible such that all non skipped chunks can be fetched will maximize the remaining bandwidth on link 1 before the deadline of every chunk. Remember, link 2 will not be used to fetch beyond the base layer, so only link 1 is used to fetch higher layers. Therefore, moving earlier chunks to link 2 can have the highest number of candidate chunks to the next layer decision.
\begin{figure}
		%\vspace{-.15in}
    \scalebox{0.95}{
		\begin{minipage}{\linewidth}
			\begin{algorithm}[H]
			\small
				\begin{algorithmic}[1]
				\STATE {\bf Input:}  $\mathbf{Y}=\{Y_n\forall n\}$, $L$, $s$, $C$, $\mathbf{B}^{(k)}=\{B^{(k)}(j)\forall j\}$, $k=1, 2$. 
				\STATE {\bf Output:} $I^{(k)}_n$, $k=1, 2$: set containing the indices of the chunks that can have their $n$th layer fetched over link $k$.	
								
   \STATE  $deadline(i)=(i-1)L+s \quad \forall i$

%\IF{($V_{1,i} \geq i, \forall i$)}
%		\STATE ${B}_{2}(j)=0, \forall j$
%\ENDIF
    \FOR{each layer $n = 0, \cdots, N$} 
  \STATE $R^{(k)}(j)=\sum_{j^\prime=1}^{j} B^{(k)}(j^\prime)$,$\forall j,k$
	\STATE $V_{n,i}^{(1)}=\floor{\frac{R^{(1)}(deadline(i)}{r_n}}$, $V_{n,i}^{(2)}=\floor{\frac{R^{(k)}(deadline(i)}{r_n}}$, $\forall i$
	\STATE $V_{n,i}=V_{n,i}^{(1)}+V_{n,i}^{(2)}$, $\forall i$
	
	\STATE $skip(0)=0$
     \FOR{$i=1:C$}
       \IF{$V_{n,i} < i-skip(i-1)$ or ($n \neq 0$ and $i \notin  I_{(n-1)}^{(1)}$ and $i \notin I_{(n-1)}^{(2)}$)}
      \STATE $skip(i)=skip(i-1)+1$
      \ENDIF
      \ENDFOR
        \STATE Skip the first ``$skip(C)$'' chunks.
        \STATE $i^\prime=skip(C)+1$: the index of the first chunk to fetch
\IF{$n=0$}	
	\STATE $M_1=skip+1:C$, $M2=[]$
	\FOR{$i=1:C$}
		\WHILE{($V_{1,i} < i-skip(i)$ and $V_{2,i} \geq  i-skip(i)$)}
			\STATE $move=min(M)$
			\STATE $M_1=M_1-\{move\}$, $M_2=M_2\cup\{move\}$

		\ENDWHILE
	\ENDFOR
\ENDIF

    \FOR{$i=i^\prime:C$}
    \STATE $j_k =deadline(i),k=\in \{1, 2\}$

    	\IF{$n=0$ or $i \in I_{n-1}^{(1)}$ or $i \in I_{n-1}^{(2)})$}
\IF{$n=0$}
\STATE {\bf if }{$i \in M_1$} {\bf then } ${B2}_{2}={\bf 0}$ {\bf else} ${B2}_{1}={\bf 0}$
\ELSE
	\STATE ${B2}_{1}={B}_{1}$, ${B2}_{2}={\bf 0}$
\ENDIF
\STATE $t=deadline(i-1)$
      				\STATE $[B2_{k},\zeta_i^n(k)]=$ Backward$(k,i, j_k, \mathbf{B2}_{k}, Y_n, t) \forall k$

			\STATE $k_1=\arg\min (\zeta_i^n)$

      				\STATE $I_n^{(k_1)}=I_n^{(k_1)} \cup i$, $\mathbf{B}_{(k_1)}=B2_{(k_1)}$

	\ENDIF
      \ENDFOR
     \ENDFOR
   				\end{algorithmic}
				\caption{Avoid-Skips MP-SVC Algorithm }\label{algo:mpsvc}
			\end{algorithm}
		\end{minipage}
	}
		%\vspace{-.55in}
	\end{figure}

Now, we show that Algorithm \ref{pref-algo} achieves the optimal solution of the proposed formulation. %We will name the algorithm of this special case ``{\bf Avoid Skips MP-SVC}''.

\begin{lemma}%{Theorem}
	Avoid-Skips MP-SVC Algorithm achieves the minimum number of base layer skips.
	\label{lem:avoid_skips}
\end{lemma}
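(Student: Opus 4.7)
The plan is to show a matching lower and upper bound on the number of base-layer skips, with $V_{0,i}$ acting as the key combinatorial quantity. First I would observe that $V_{0,i} = \lfloor R^{(1)}(deadline(i))/r_0 \rfloor + \lfloor R^{(2)}(deadline(i))/r_0 \rfloor$ is an \emph{absolute} upper bound on the number of base layers (of any chunks) that can be fully downloaded on both links combined by time $deadline(i)$. In particular, since chunks $1,\dots,i$ all have deadlines at most $deadline(i)$, any feasible algorithm can fetch at most $V_{0,i}$ of them by their deadlines, so at least $(i - V_{0,i})^+$ among the first $i$ chunks must be skipped. Taking the max over $i$, any feasible algorithm incurs at least $S^\star \triangleq \max_{1 \le i \le C} (i - V_{0,i})^+$ base-layer skips.

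Next I would show by a short induction on $i$ that the algorithm's counter satisfies $skip(i) = \max_{1 \le i' \le i}(i' - V_{0,i'})^+$, so that $skip(C) = S^\star$. The base case is $skip(0)=0$. For the inductive step, the update rule only increments $skip(i)$ by one when $V_{0,i} < i - skip(i-1)$, which is exactly the condition needed to maintain the invariant $skip(i) \ge i - V_{0,i}$ while keeping $skip$ minimal and non-decreasing. This matches the lower bound.

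The third step is to show that the algorithm is actually \emph{feasible} when it skips precisely the first $skip(C)$ chunks, i.e.\ that every remaining chunk can be delivered by its deadline. For any kept chunk $i > skip(C)$, the invariant gives $V_{0,i} \ge i - skip(i) \ge i - skip(C)$, so the cumulative combined bandwidth by $deadline(i)$ suffices for all $i - skip(C)$ kept chunks up to index $i$. Because all base layers have identical size $Y_0$, deadlines are in index order, and the algorithm's line 16--24 rule reassigns the \emph{earliest} unfulfillable chunks to link~2 (the less preferred link), the assignment between the two links is admissible: whenever link~1 alone cannot meet deadlines by time $deadline(i)$ (i.e.\ $V_{1,i}^{(1)} < i - skip(i)$), link~2 still has spare capacity since $V_{0,i} \ge i - skip(i)$. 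An EDF-style argument then certifies that the Backward scan at line~34 succeeds on each link independently: within each link, chunks are fetched in deadline order and the cumulative per-link bandwidth always dominates the cumulative per-link demand.

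The main obstacle I anticipate is the feasibility argument in the last step, because the no-splitting constraint (\ref{equ:c4eq1}) means a base layer must be entirely on one link. To handle this cleanly I would frame the reassignment step of lines~16--24 as an exchange argument: starting from the hypothetical ``all on link~1'' schedule, moving the smallest-indexed offending chunks to link~2 never worsens the scheduling feasibility on either link, and the process terminates exactly when both $V_{0,\cdot}^{(1)}$ and $V_{0,\cdot}^{(2)}$ budgets simultaneously dominate the residual demands. Combining this feasibility with the lower bound gives optimality of Avoid-Skips MP-SVC for minimizing base-layer skips.
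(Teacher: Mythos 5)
Your proof is sound but follows a genuinely different route from the paper's. The paper disposes of this lemma in three sentences: it invokes Lemma~\ref{them:skip1} (the forward scan of MP-SVC already attains the minimum number of base-layer skips for \emph{any} feasible algorithm) and then observes that the Avoid-Skips variant merely re-assigns already-accepted base layers between the two links, so it cannot fetch fewer of them. You instead prove the statement from scratch: an explicit closed-form lower bound $S^\star=\max_i (i-V_{0,i})^+$ valid for every feasible schedule, an induction showing the algorithm's counter $skip(C)$ equals $S^\star$, and a constructive feasibility argument for the two-link partition. What your approach buys is precision exactly where the paper is thinnest: the paper never verifies that the kept chunks can actually be partitioned across the two links under the no-splitting constraint \eqref{equ:c4eq1} so that each link's schedule meets all deadlines; your exchange/EDF argument for lines 16--24 is the missing certificate, and your explicit formula for the optimum is a nice by-product. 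What the paper's route buys is brevity and reuse of already-proved machinery. Two details in your write-up deserve a line each: (i) the induction $skip(i)=\max_{i'\le i}(i'-V_{0,i'})^+$ relies on $V_{0,i}$ being non-decreasing in $i$ (so that $i-V_{0,i}\le skip(i-1)+1$ and a single increment suffices) --- state this; (ii) the partition-feasibility step is currently a sketch: you should either carry out the exchange argument to completion (showing that moving the earliest-indexed offending chunks to link~2 preserves per-link EDF feasibility at every deadline, a Hall-type condition that holds here because the constraint family is nested by deadline and all base layers have equal size $Y_0$) or explicitly reduce it to Lemma~\ref{them:skip2} as the paper implicitly does.
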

\begin{proof}
Proof is provided in  Appendix \ref{proofPref0}. 
\end{proof}

\begin{lemma}\label{lem_lower}
	Among two strategies with the same number of chunks fetched at base layer, the one that obtains lower content over less preferable links is preferred.
	\label{lem:avoid_skips2}
\end{lemma}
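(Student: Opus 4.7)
The plan is to verify the claim directly from the weight conditions (2) and (3), since the lemma is purely a statement about how the objective in (4) ranks feasible strategies. Let $S_A$ and $S_B$ be two strategies with $\sum_i I_{0,i}^{(1)}(S_A) + I_{0,i}^{(2)}(S_A) = \sum_i I_{0,i}^{(1)}(S_B) + I_{0,i}^{(2)}(S_B)$, and suppose $S_A$ uses strictly less content over link $2$ than $S_B$. Writing $m_n^{(k)}(S) = \sum_i I_{n,i}^{(k)}(S)$, the objective decomposes as $U(S) = \sum_{n=0}^N \bigl(\lambda_n^{(1)} m_n^{(1)}(S) + \lambda_n^{(2)} m_n^{(2)}(S)\bigr)$, so I only need to bound $U(S_A) - U(S_B)$ from below by zero.

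I would first handle the base layer. Because the total base-layer count is the same while $S_A$ places fewer base-layer fetches on link $2$, we must have $m_0^{(2)}(S_A) \le m_0^{(2)}(S_B)$ and $m_0^{(1)}(S_A) \ge m_0^{(1)}(S_B)$ with equal totals, so the base-layer contribution to $U(S_A)-U(S_B)$ is $(\lambda_0^{(1)} - \lambda_0^{(2)})\bigl(m_0^{(2)}(S_B) - m_0^{(2)}(S_A)\bigr) \ge 0$, which is strict whenever this base-layer reshuffle is nonzero because condition (2) at $a=0$ gives $\lambda_0^{(1)} > C\lambda_0^{(2)} > \lambda_0^{(2)}$.

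For the enhancement layers, I would proceed inductively from $a = 1$ upward. At each layer $a$ the differences $\Delta_a^{(k)} = m_a^{(k)}(S_A) - m_a^{(k)}(S_B)$ can be of either sign, and the net link-$2$ budget $\sum_a \Delta_a^{(2)} < 0$ is the only global constraint. The key algebraic step is to upper-bound any positive contribution $\lambda_n^{(2)} |\Delta_n^{(2)}|$ from higher-layer link-$2$ fetches in $S_B$ by invoking (3), which in the form $\lambda_a^{(2)} > C \sum_{n > a} \lambda_n^{(2)} Y_n$ says that a single link-$2$ fetch at layer $a$ dominates all higher-layer link-$2$ fetches combined; similarly, (2) is used to absorb link-$1$ differences at higher layers into link-$1$ differences at lower layers. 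Cascading these bounds layer by layer collapses the entire enhancement-layer contribution into a single link-$1$ base-layer term whose sign agrees with the base-layer analysis above.

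The hard part will be the bookkeeping in the inductive step: the $\Delta_a^{(k)}$ may flip sign across layers, and $S_A$ could genuinely have fewer enhancement-layer fetches than $S_B$ (since diverting less bandwidth to link~$2$ may leave some high-layer chunks unfetched). The success of the argument rests entirely on the factor of $C$ in (2)–(3), which gives enough slack so that even if every one of the $C$ chunks differs at every enhancement layer in the direction unfavorable to $S_A$, the single dominating term $\lambda_0^{(1)}$ (or $\lambda_a^{(1)}$ at the lowest layer where $S_A$ gains on link $1$) overwhelms the sum. Once this worst-case accounting is formalized, strictness of the preference follows from the strict inequalities in (2) and (3).
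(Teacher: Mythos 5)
Your argument is correct and follows essentially the same route as the paper's: both reduce the claim to the observation that moving $\Delta\geq 1$ base layers from link $2$ to link $1$ gains $(\lambda_0^{(1)}-\lambda_0^{(2)})\Delta$, which by condition (2) at $a=0$ strictly dominates any conceivable change in the remaining terms of the objective. The paper simply lumps all enhancement-layer contributions into aggregate quantities $D_1,D_2$ and bounds $D_2-D_1$ in one step, so your layer-by-layer cascade is more bookkeeping than is actually needed.
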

\begin{proof}
Proof is provided in  Appendix \ref{proofPref1}. 
\end{proof}

\begin{lemma}%{Theorem}
	Avoid Skips MP-SVC Algorithm has the minimum data usage of link 2. In other words, no other algorithm can achieve less data usage for link 2 with the same number of skips as that achieved by Avoid Skips MP-SVC.
	\label{lem:avoid_skips3}
\end{lemma}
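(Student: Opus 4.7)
The plan is to convert the lemma into a purely combinatorial statement and then match an LP-style lower bound with the algorithm's construction.

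Since this regime forces $n_2=0$, every fetch on link 2 is of size $Y_0$. Thus the link-2 data usage of any strategy equals $|S_2|\cdot Y_0$, where $S_2$ is the set of chunks whose base layer is delivered over link 2. By Lemma \ref{lem:avoid_skips}, both Avoid-Skips MP-SVC and any competing algorithm $\mathcal{A}^*$ that realizes the minimum number of skips must fetch exactly the same number of base layers; moreover, the skip rule ``drop the earliest chunks'' used in the forward scan forces them to fetch the same prefix-count of base layers, so we may assume both deliver the same set $F\subseteq\{1,\ldots,C\}$. Minimizing link-2 data therefore reduces to minimizing $|S_2|$.

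Next I would derive a lower bound on $|S_2|$ from single-link feasibility. Since all base layers have identical size and are fetched in index order, an EDF-style feasibility condition asserts that for every chunk index $i$, the number of base layers scheduled on link $k$ with index at most $i$ is at most $V^{(k)}_{0,i}=\lfloor R^{(k)}(\mathrm{deadline}(i))/r_0\rfloor$. Writing $t_i:=|F\cap\{1,\ldots,i\}|$, any feasible assignment $(S_1^*,S_2^*)$ satisfies $|S_2^*\cap\{1,\ldots,i\}|\geq t_i-V^{(1)}_{0,i}$ for every $i$, and hence
\[
|S_2^*|\;\geq\;m^*\;:=\;\max_{1\leq i\leq C}\bigl(t_i-V^{(1)}_{0,i}\bigr)^{+}.
\]

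I would then show that Avoid-Skips MP-SVC attains $|S_2^A|=m^*$. Inspecting lines 16--24 of Algorithm~4, a chunk is moved from link 1 to link 2 only while the tentative assignment still violates $V^{(1)}_{0,i}\geq i-\mathrm{skip}(i)$, and the inner loop stops as soon as the inequality holds. A straightforward invariant argument on $i$ then gives $|S_2^A\cap\{1,\ldots,i\}|=(t_i-V^{(1)}_{0,i})^{+}$ for every $i$, so $|S_2^A|=m^*$. Combining the matching upper and lower bounds with Lemma \ref{lem:avoid_skips2} (which says that, at equal base-layer counts, the strategy with fewer chunks on the less preferable link is preferred) establishes the claim.

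The main obstacle I anticipate is verifying that the particular moves performed by Avoid-Skips actually yield a schedule that is simultaneously feasible on link 2; one must rule out that moving the earliest eligible chunk violates $V^{(2)}_{0,i}$ for some later $i$. I would dispatch this by an exchange argument: the existence of any feasible competitor $\mathcal{A}^*$ with $|S_2^*|\geq m^*$ certifies that at least $m^*$ chunks can be accommodated on link 2; repeatedly swapping the latest link-2 chunk of $\mathcal{A}^*$ with the corresponding earliest move made by Avoid-Skips only shifts the occupied chunk to a smaller index, which preserves link-2 feasibility because the cumulative capacities $V^{(2)}_{0,i}$ are non-decreasing in $i$. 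Transferring feasibility in this fashion turns $\mathcal{A}^*$ into Avoid-Skips' assignment while never violating any link-2 deadline, completing the proof.
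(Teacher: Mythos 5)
Your argument is correct, but it reaches the lemma by a genuinely different route than the paper. The paper's proof leans entirely on algorithmic structure: it observes that the preference-aware pass re-runs MP-SVC on link 1 alone over the chunks initially assigned to link 2, and invokes the maximality of MP-SVC (Lemma \ref{them:skip1}) to conclude that the largest possible number of those chunks is pulled back to link 1, hence the residue on link 2 is minimal. You instead prove a matching pair of bounds: a Hall/EDF-type counting lower bound $|S_2^*|\geq\max_i\,(t_i-V^{(1)}_{0,i})^{+}$ that every feasible assignment must satisfy, plus a verification that the ``move the earliest chunk to link 2 only when link 1's prefix capacity is exceeded'' rule attains it with equality. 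Your version is more self-contained and arguably sharper, since it exhibits the optimal link-2 usage in closed form and does not rely on the optimality of a separate subroutine. Two small points would tighten it. First, you ``may assume'' the competitor fetches the same set $F$; only the number of skips is fixed, so this is not given --- but it is also unnecessary: because Avoid-Skips drops the earliest chunks, any competitor with the same skip count has prefix counts $t_i^*\geq t_i$ for every $i$, so its lower bound is at least $m^*$ and the conclusion survives without the assumption. Second, your exchange argument for link-2 feasibility can be replaced by a one-line observation: the forward scan guarantees $t_i\leq V^{(1)}_{0,i}+V^{(2)}_{0,i}$ for all $i$, hence $(t_i-V^{(1)}_{0,i})^{+}\leq V^{(2)}_{0,i}$, which is exactly the EDF feasibility condition for the equal-size base layers placed on link 2.
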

\begin{proof}
Proof is provided in  Appendix \ref{proofPref2}. 
\end{proof}

%The proof follow from the fact that the algorithm initially runs No-Pref SVC which was proved to achieve the minimum number of skips per layer given the decisions of al lower layers, applying that to the base layer will yield the proof..

\begin{lemma}%{Theorem}
	Among all algorithms with the same number of base layer skips, Avoid Skips MP-SVC reserves the largest possible bandwidth over the preferred link for fetching the enhancement layers. In other words, the proposed algorithm maximizes the resources to the higher layers among all algorithms that have same base layer decisions.
	\label{lem:avoid_skips4}
\end{lemma}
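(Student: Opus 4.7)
The plan is to reduce the question to a clean combinatorial selection problem by leveraging the structural results of the earlier lemmas. First, by Lemma~3, any algorithm $A'$ with the same number of base-layer skips as Avoid-Skips MP-SVC uses at least as much link-$2$ data; restricting to $A'$ that respects the preference (i.e., attains this minimum), $A'$ assigns exactly $k$ base-layer chunks to link~$2$ and hence $|S_1'|=|S_1|=(\text{non-skipped})-k$ chunks to link~$1$. The remaining question is: among feasible sets $S_2$ with $|S_2|=k$, which one maximizes the link~$1$ residual at every deadline?

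To answer this, I give an explicit formula for the link~$1$ residual under the backward scan used by Avoid-Skips. Within link~$1$, the backward subroutine (Algorithm~2) always defers a chunk's bandwidth consumption to the latest available slot preceding its deadline, so on any prefix $[0,t]$ the bandwidth used equals
\[
\text{used}(t) \;=\; \max\!\bigl(n_1(t)\,Y_0,\ |S_1|\,Y_0 - B^{(1)}(t+1..T)\bigr),
\]
where $n_1(t)=|\{c\in S_1:\text{deadline}(c)\le t\}|$. The first term in the maximum counts bandwidth mandatorily spent by chunks whose deadlines lie in $[0,t]$; the second accounts for overflow of later-deadline chunks into $[0,t]$ when the tail bandwidth $B^{(1)}(t+1..T)$ is insufficient. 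Since $|S_1|$ and $B^{(1)}$ are fixed across the comparison class, the second term is invariant, and maximizing $R^{(1)}(t)=B^{(1)}(1..t)-\text{used}(t)$ is equivalent to minimizing $n_1(t)$, equivalently maximizing $n_2(t)=|\{c\in S_2:\text{deadline}(c)\le t\}|$.

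Because deadlines grow monotonically in chunk index ($\text{deadline}(i)=(i-1)L+s$), the single set $S_2$ of size $k$ that maximizes $n_2(t)$ simultaneously at every $t$ is the earliest $k$ non-skipped chunks, $S_2=\{\text{skip}+1,\ldots,\text{skip}+k\}$. Inspection of lines~16--24 of Algorithm~4 shows Avoid-Skips MP-SVC produces precisely this set: each invocation \texttt{move}$=\min(M_1)$ transfers the smallest remaining index from $M_1$ to $M_2$, so $M_2$ is always an index prefix of the non-skipped chunks; its feasibility on link~$2$ is guaranteed by the while-loop condition $V_{2,i}\ge i-\text{skip}(i)$ (this is what Lemma~3 ultimately captures). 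Putting the steps together yields $R^{(1)}(t)\ge R'^{(1)}(t)$ at every deadline $t$, which is exactly the claim, since enhancement layers of chunk $i$ must be fetched on link~$1$ out of the bandwidth in $[0,\text{deadline}(i)]$.

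The main obstacle is establishing the explicit formula for $\text{used}(t)$ under the forward-chunk/backward-bandwidth scan of Avoid-Skips. Its matching lower bound is a short counting argument: chunks with deadline $\le t$ must be fetched entirely within $[0,t]$, contributing $n_1(t)Y_0$; and all of $S_1$ must fit in $[0,T]$, so if the tail $(t,T]$ cannot absorb the late-deadline chunks alone, the overflow spills into $[0,t]$. Matching this bound requires verifying that the algorithm's scan never uses more prefix bandwidth than necessary; I plan to prove this by induction on the chunks processed, showing that each call to the backward subroutine pushes consumption as late as feasible given what has already been scheduled, and that any attempt to achieve lower prefix consumption by reordering or spreading the chunks can be transformed back to a backward schedule without loss via a swap argument.
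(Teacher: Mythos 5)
Your proposal follows the same skeleton as the paper's own proof: both reduce the claim to showing (i) that Avoid-Skips assigns to link~2 exactly the earliest feasible prefix of non-skipped chunks, so that its link-2 index set is dominated coordinatewise ($i_k \le j_k$) by that of any competitor with the same link-2 cardinality, and (ii) that displacing \emph{earlier} chunks onto link~2 leaves more link-1 bandwidth before every deadline. The paper disposes of (ii) in one sentence (``it comes before the deadline of more chunks''); you try to actually prove it, by characterizing the prefix consumption of the backward scan and reducing the comparison to pointwise minimization of $n_1(t)$. That added rigor is welcome, since the paper's Appendix~B.4 is essentially an assertion rather than a proof, and you are also more explicit than the paper about a necessary scoping: the comparison class must be restricted to competitors with the same (minimal) link-2 usage, because taken literally ``same number of base layer skips'' would admit competitors that dump more chunks onto link~2 and trivially leave more link-1 residual.

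However, the explicit formula on which you build the reduction is false. Counterexample: $Y_0=1$, three link-1 chunks with deadlines $2,4,6$ (i.e., $L=2$, $s=2$), and $B^{(1)}=(10,0,0,0,0,10)$. At $t=1$ your formula gives $\max\bigl(n_1(1)\,Y_0,\ 3Y_0-B^{(1)}(2..6)\bigr)=\max(0,-7)=0$, yet the backward scan is forced to place chunks~1 and~2 in slot~1 because slots $2$--$5$ carry no bandwidth, so $\mathrm{used}(1)=2$. The correct expression is $\mathrm{used}(t)=\max\bigl(0,\ \max_{t'\ge t}\{n_1(t')\,Y_0-B^{(1)}(t+1..t')\}\bigr)$: one must take the maximum over every intermediate cut $t'$, not only $t'=t$ and $t'=T$, because chunks with deadlines in $(t,t']$ overflow into $[0,t]$ whenever the window $(t,t']$ is starved even if the full tail $(t,T]$ is not. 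The good news is that your architecture survives the repair: each term $n_1(t')\,Y_0-B^{(1)}(t+1..t')$ is increasing in $n_1(t')$ with $B^{(1)}$ fixed, the prefix choice of $S_2$ minimizes $n_1(t')$ simultaneously for every $t'$, and the corrected expression is both a valid lower bound on the prefix consumption of \emph{any} feasible schedule of $S_1'$ (count the chunks that must finish by $t'$ and subtract what fits in $(t,t']$) and is achieved by the backward scan. So the deferred induction/swap argument should target the corrected formula; as stated, it would fail.
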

\begin{proof}
Proof is provided in  Appendix \ref{proofPref3}. 
\end{proof}
Combining these results, the following theorem shows the optimality of the proposed algorithm. 
\begin{theorem}	
	Up to a given enhancement layer
	$M, M \geq 0$, if ${Z_{m}^{(k)}}^*$ is the size of the $m$th layer fetched by the $k$th link ($m \leq M$) of chunk $i$ that is found by running Avoid-Skips MP-SVC algorithm, and ${Z_{m}^{(k)}}^\prime$ is the size that is found by any other feasible algorithm such that all constraints are satisfied, then the following holds when $\lambda$'s satisfy \eqref{basic_gamma_1_0} and \eqref{basic_gamma_1_1}.%for any $0<\gamma \ll 1$. %and $\sum_{j > k}\gamma^j \sum_i Z_{j,i} \ll \gamma^k  Z_{k,1}$.
	\begin{align}
	& \sum_{i=1}^C\sum_{n=0}^{N}\Big(\lambda_n^1{I_{n,i}^{(1)}}^\prime+\lambda_n^2{I_{n,i}^{(2)}}^\prime\Big) \\& \leq  \sum_{i=1}^C\sum_{n=0}^{N}\Big(\lambda_n^1{I_{n,i}^{(1)}}^*+\lambda_n^2{I_{n,i}^{(2)}}^*\Big)
	\label{thm:thm_avoidSkips}
	\end{align}
	In other words, Avoid Skips MP-SVC finds the optimal solution to the optimization problem~(\ref{equ:eq1}-\ref{equ:c9eq1}) when $n_2=0$ and $\lambda_1$ and $\lambda_2$ satisfy \eqref{basic_gamma_1_0} and \eqref{basic_gamma_1_1}.
	
	\label{theorem: theorem2}
\end{theorem}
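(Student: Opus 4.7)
The plan is to interpret the weighted objective lexicographically using the constraints \eqref{basic_gamma_1_0}--\eqref{basic_gamma_1_1} on the $\lambda$'s, and then peel off the priorities one at a time, invoking the appropriate lemma at each level. The inequality $\lambda_a^{(1)} > C\cdot\bigl(\sum_{n\geq a}\lambda_n^{(2)} + \sum_{n>a}\lambda_n^{(1)}\bigr)$ means that a single chunk served at layer $a$ on link $1$ dominates \emph{every} possible combination of chunks served at strictly higher layers or on link $2$ at layer $a$ and above. Analogously, $\lambda_a^{(2)} > C\sum_{n>a}\lambda_n^{(2)}Y_n$ ensures that within link $2$, having one more chunk reach layer $a$ beats any gain at higher layers on link $2$. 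Since $n_2=0$, only the base layer may be carried over link $2$, so only the first inequality with $a=0$ and the base-layer entry of the second play a role beyond link $1$.

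From this, I would establish the following lexicographic ordering of sub-objectives, each implied by the previous one being fixed: (i) maximize the number of chunks whose base layer is fetched at all (i.e.\ minimize base-layer skips); (ii) among such solutions, minimize the number of base-layer chunks assigned to link $2$; (iii) among those, maximize the number of chunks reaching each successive enhancement layer $n=1,\dots,N$ on link $1$, in order. Lemma \ref{lem:avoid_skips} handles (i). For (ii), Lemma \ref{lem:avoid_skips2} formalizes that the $\lambda$'s prefer content on link $1$ at a fixed total count, and Lemma \ref{lem:avoid_skips3} shows that Avoid-Skips MP-SVC attains the minimum link-$2$ footprint consistent with (i). For (iii), Lemma \ref{lem:avoid_skips4} guarantees that after the base-layer decisions, Avoid-Skips MP-SVC leaves the largest possible pre-deadline bandwidth on link $1$; combined with Lemma \ref{them:skip1} (which says that, given any fixed lower-layer decisions and any feasible residual bandwidth, our forward/backward procedure maximizes the number of chunks that can be upgraded to the next layer), no alternative algorithm can match both the base-layer count and the layer-$n$ count for any $n\geq 1$.

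I would then formalize optimality by induction on the layer index $n$. The induction hypothesis says that up to layer $n-1$, the Avoid-Skips MP-SVC decisions match the lexicographically optimal ones; the $\lambda$ constraints ensure any algorithm that differs by reducing the layer-$n-1$ count (even to boost higher layers) incurs a strictly larger loss than any possible gain. In the induction step, invoking Lemma \ref{them:skip1} with the same remaining bandwidth (guaranteed by Lemma \ref{lem:avoid_skips4} and Lemma \ref{them:skip2}) shows Avoid-Skips MP-SVC attains the maximum layer-$n$ count on link $1$. Summing the objective value layer by layer and using the telescoping dominance furnished by \eqref{basic_gamma_1_0}--\eqref{basic_gamma_1_1} then yields the claimed inequality.

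The main obstacle I anticipate is the exchange argument underlying the transition from priority (ii) to priority (iii): I need to rule out a ``clever'' competitor that accepts slightly more link-$2$ usage at the base layer in order to free link-$1$ bandwidth earlier and thereby fetch more enhancement layers. The constraint $\lambda_0^{(1)}-\lambda_0^{(2)} > C\sum_{n\geq 1}\lambda_n^{(1)}$ (implied by \eqref{basic_gamma_1_0} at $a=0$) controls this, since moving even a single base-layer chunk from link $1$ to link $2$ costs more than \emph{all} possible enhancement-layer gains on link $1$. Verifying that this inequality is tight enough in all cases, and that the bandwidth freed by such an exchange cannot violate the lemma by being scheduled outside link $1$'s deadline window, will require a careful exchange/swap argument inside the proof of (ii)--(iii), most likely by contradiction: assume a feasible $\{{Z_{m,i}^{(k)}}^\prime\}$ strictly exceeds $\{{Z_{m,i}^{(k)}}^*\}$ in objective, derive that it must either reduce a layer-$n$ count that Avoid-Skips MP-SVC attains or violate the bandwidth/deadline constraint characterized in Lemmas \ref{them:skip1}--\ref{lem:avoid_skips4}, contradicting the applicable lemma.
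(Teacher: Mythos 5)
Your proposal follows essentially the same route as the paper's proof: a lexicographic reading of the objective enforced by \eqref{basic_gamma_1_0}--\eqref{basic_gamma_1_1}, with Lemma \ref{lem:avoid_skips} for minimum base-layer skips, Lemmas \ref{lem:avoid_skips2}--\ref{lem:avoid_skips3} for minimum link-2 usage, Lemma \ref{lem:avoid_skips4} plus Lemmas \ref{them:skip1}--\ref{them:skip2} for the layer-by-layer induction on link 1. The ``clever competitor'' exchange you flag as the main obstacle is exactly what the paper's Lemma \ref{lem_lower} disposes of, so your argument is correct and complete in the same sense as the paper's.
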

\begin{proof}
Proof is provided in  Appendix \ref{Prefproof}. 
\end{proof}

%\label{sec:pref}

\if0
\begin{figure*}
%\vspace{-.15in}
\centering
\includegraphics[trim=0.6in 1.5in 0.8in 1.5in, clip,  width=\textwidth]{./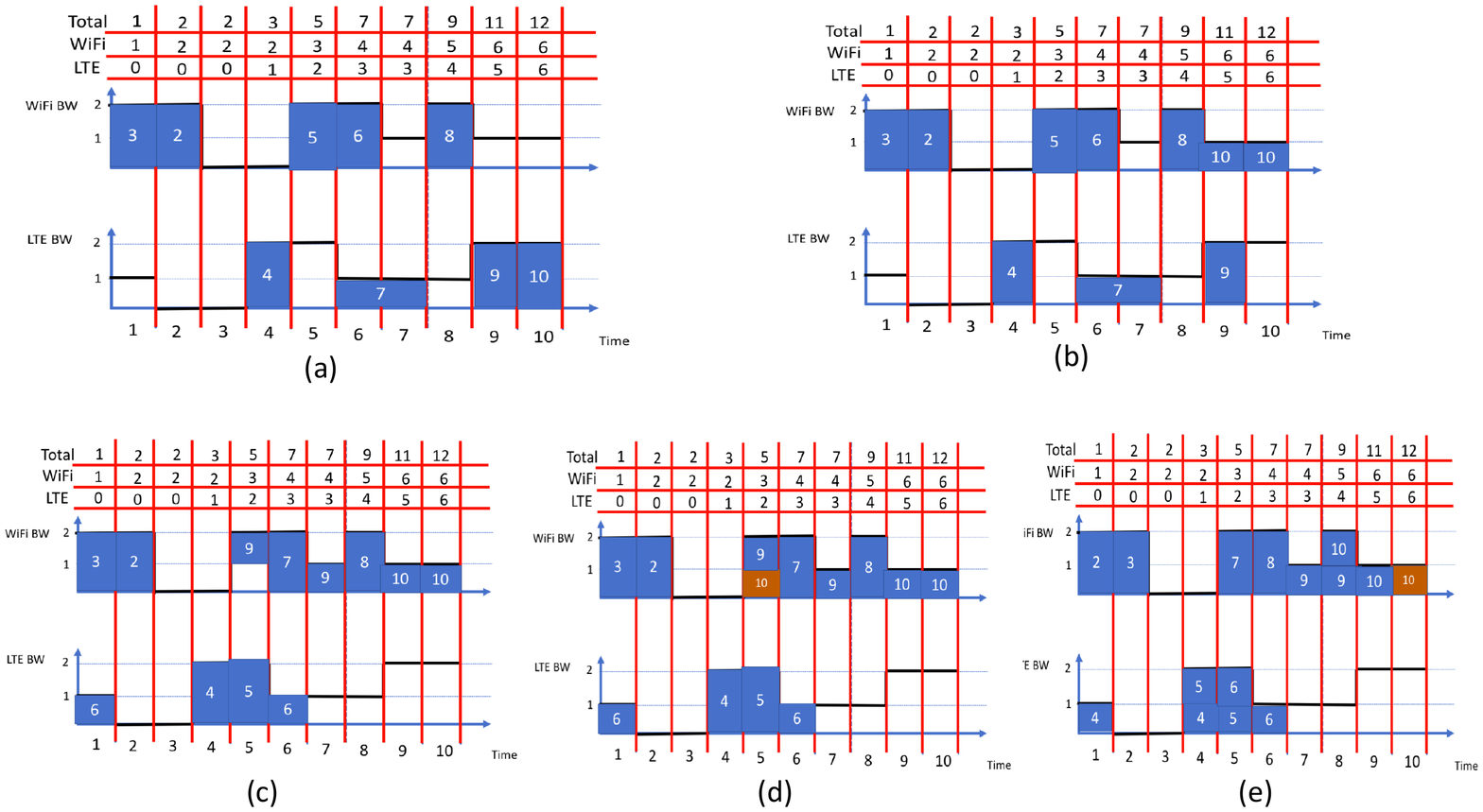}	
%\vspace{-.7in}
 \vspace{-.3in}
 \caption{ Illustration Example of Pref MP-SVC Algorithm: (a) After the first run of MP-SVC, (b) After the second run of MP-SVC, (c) After calling the Exchange algorithm, (d) The Final Fetching Policy}
 %\label{fig : BCompeq}
\label{fig:ex2}
\end{figure*}

{\bf Example 2: } This example illustrates how Pref MP-SVC works. We assume that LTE link is used only to avoid skips. We assume the same bandwidth traces and video parameters described in example 1. Fig.~\ref{fig:ex2}-a shows the initial fetching policy after the first call of MP-SVC. According to the initial call of MP-SVC, chunks 2 to 10 can be fetched at least at base layer quality.

Fig.~\ref{fig:ex2}-b shows the outcome of the second run call of MP-SVC. In this call only link 1 and chunks that were initially decided to be fetched by link 2 ($4,7,9,$ and $10$) are considered. As a result to the second call of MP-SVC, chunk 10 is moved to link 1. Fig.~\ref{fig:ex2}-c shows the outcome of the exchange algorithm. First, chunk 9 is exchanged with 5, and that allow of moving the remaining bandwidth of link 1 to the fifth time slot. Therefore, if the sum of all enhancement layers of chunks $5$ to $10$ is less than 1, then all these chunks can have their $E1$s fetched. Second, chunk 7 is exchanged with 6. However, this exchange does not provide any more free bandwidth earlier than the deadline of the $6$-th chunk. 

Fig.~\ref{fig:ex2}-c shows the $E1$ decisions. Since LTE doesn't fetch $E1$, and the $E1$ size of every chunk is $1Mb$, only the $10$-th chunk can have its $E1$ fetched. Finally, Fig.~\ref{fig:ex2}-d shows the actual sequence of chunk downloads.
\fi

{\bf Pref-MPTCP-SVC: } In this algorithm, we first run MP-SVC algorithm for layers $n=0$ to $n_2$. However, we aggregate the bandwidth of the two links to form one link and we assume another link with the bandwidth set to zero all times. Once the decisions are found. We run a forward scan simulating fetching the chunks according to the decided quality using link 1 only. Any chunk can't meet its deadline over link 1, we use link 2 to fetch equivalent amount of the difference from the earliest chunk/chunks decided to be fetched by link 1. By this, we achieve the minimum usage of link 2 such that all MPTCP-SVC decisions are respected, and we also provide the maximum bandwidth over link 1 for next layer decisions.\ie, the bandwidth that is left in link 1 by moving earliest chunks to link 2 can be used to increase the quality of more chunks to the next layer since it comes before the deadline of more chunks than any other choice. MPTCP-SVC can be shown to be optimal and works for both AVC as well as SVC encoded videos. The optimality  proofs of Pref-MPTCP SVC are omitted since the paper is not for MPTCP scenario. We are just including MPTCP based algorithms for the sake of comparison.

%\input{optimal}
%\input{skipperfect_E}
%\vspace{-.1in}
%\subsection{Skip Based Streaming: Window-Based Adaptation with Prediction Errors}
\subsection{Online MP-SVC/Pref MP-SVC: Dealing with Short and Inaccurate BW Prediction}
\label{sec:bw_err}

For the algorithm described in Sections \ref{skipalgoEqual} and \ref{sec:pref}, we assumed that perfect bandwidth prediction, and  client buffer capacity is unlimited. However, practically, the prediction will not be perfect, and the client buffer might be limited. In this section, we will use an online algorithm that will obtain prediction for a window of size $W$ chunks ahead starting from the chunk that has the next time slot as its deadline and make decisions based on the prediction. Note that the buffer capacity at the user may be limited, and in that case we consider that $W$ is no more than the buffer capacity since the chunks beyond that will not be fetched in the algorithm. The way we capture the buffer capacity is by assuming that at every re-run of the algorithm (every $\alpha$ seconds), only chunks that are $B_{max}$ ahead of the chunk that is currently being played can be fetched.
%{\bf Vaneet, I tried to say something here, but I am sure this way of describing it is bad. Would you mind re-phrasing it in a better way}
%In fact, with current devices (several Giga Bytes of RAM), the memory is not an issue. Devices can store many chunks ahead but with the concern of the media sources about users fetching many chunks ahead, compete the bandwidth with other users and then do not continue watching, it became necessary to bound the number of ahead chunks that can be fetched at any given time (just in time video
 There are multiple ways to obtain the prediction, including a crowd-sourced method to obtain historical data \cite{Riiser12,GTube}. Another approach may be to use a function of the past data rates obtained as a predictor for the future, an example is to compute the harmonic mean of the past $\beta$ seconds to predict the future bandwidth ~\cite{chen2016msplayer}. The decisions are re-computed for the chunks that have not yet reached their deadlines periodically every $\alpha$ seconds.

For the prediction window $W$, the algorithm in Sections \ref{skipalgoEqual} and \ref{sec:pref} are run to find the quality using the predicted bandwidth profile, then the $W$ chunks are fetched according to the algorithm decision. If all $W$ chunks are fetched before the next re-computation time, the current time is set as the re-computation point, and the fetching policy for the next $W$ chunks from the one that is has the next time slot as its deadline is computed. Finally, at the start of the download, all links are assigned chunks at base layer quality to fetch since there is no bandwidth prediction available yet.

\subsection{No Skip Based  Streaming Algorithm}\label{no_skip}

In no-skip streaming (\textit{i.e.,} watching a pre-recorded video), when the deadline of a chunk  cannot be met,
%The main difference between the real-time and non-realtime streaming is that
rather than skipping it, the player will stall the video and continue downloading the chunk. The objective here is to maximize the average quality while minimizing the stall duration (the re-buffering time). The objective function is slightly different from equation~\eqref{equ:eq1} since we do not allow to skip the base layer. However, we skip higher layers.
For the constraints, all constraints are the same as skip based optimization problem except that we add constraint~(\ref{equ:c3q2}) to enforce $Z_{0,i}$ to be equal to the BL size of the chunk $i$ (it can't be zero). %It can't take zero value, and we add a constraint to put upper bound to the startup delay. %We minimize the startup delay and later stalls such that the startup delay do not exceed a certain threshold.
We define the total stall (re-buffering) duration from the start till the play-time of chunk $i$ as $d(i)$. Therefore, the deadline of any chunk $i$ is $(i-1)L+s+d(i)$.  The objective function is thus given as:
\begin{equation}
 \Big(\sum_{i=1}^C\sum_{n=0}^{N}\big(\lambda_n^1I_{n,i}^{(1)}+\lambda_n^2I_{n,i}^{(2)}\big)-\mu d(C)\Big)
 \label{equ:noSkipObj}
\end{equation}

 %$ \sum_{n=1}^{N}\gamma^n\sum_{i=1}^{C} Z_{n,i}-\mu \sum_{i=1}^C d(i),$
where the weight for the stall duration is chosen such that $\mu \gg \lambda_n^{(1)}$, since users tend to care more about not running into re-buffering over better quality. 
This is a multi-objective optimization problem with quality and stalls as the two objectives, and is formulated as follows.
\begin{eqnarray}
%\hspace{-0.2in}
\textbf{Maximize: } (\ref{equ:noSkipObj})
\label{equ:eq2}
\end{eqnarray}
%\vspace{-.1in}
%\begin{align*}
%& maximize \Bigg(\sum_{n=0}^{N}\gamma^n\sum_{i=1}^{C}q(Z_{n,i})\Bigg)
%\label{equ:eq1}
%\end{align*}
subject to  \eqref{equ:c1eq1}, \eqref{equ:c2eq1}, \eqref{equ:c3eq1}, \eqref{equ:c4eq1} ,  \eqref{equ:c5eq1}, \eqref{equ:c6eq1}, \eqref{equ:c8eq1}, \eqref{equ:c9eq1},

\begin{eqnarray}
\sum_{j=1}^{(i-1)L+s+d(i)} z_n^{(k)}(i,j) = Z_{n,i}^{(k)}\quad  \forall i,  n, k \in \{1,2\} 
\label{equ:c0eq2}
\end{eqnarray}

\begin{equation}
z_n^{(k)}(i,j)= 0\   \forall \{i: (i-1)L+s+d(i) < j\}, k\in \{1,2\}
\label{equ:c1eq2}
\end{equation}
\begin{equation}
d(i) \geq d(i-1)\geq 0\   \forall i>0 \label{equ:c2eq2}
\end{equation}
\begin{equation}
Z_{0,i} =Y_0
\label{equ:c3q2}
\end{equation}

\if0
%\vspace{-.1in}
\begin{equation}
\left(\sum_{j=1}^{(i-1)L+s+d(i)} z_n^{(1)}(i,j)\right)\left(\sum_{j=1}^{(i-1)L+s+d(i)} z_n^{(2)}(i,j)\right) = 0\   \forall i, n
\label{equ:d5eq1}
\end{equation}
%\vspace{-.1in}
\begin{equation}
z_n^{(k)}(i,j)= 0\   \forall \{i: (i-1)L+s+d(i) > j\}, k\in \{1,2\}
\label{equ:d6eq1}
\end{equation}
%\vspace{-.1in}
\begin{equation}
d(i) \geq d(i-1)\geq 0\   \forall i>0 \label{deq}
\end{equation}
%\vspace{-.1in}
\begin{equation}
Z_{0,i} =Y_0
\label{equ:d10eq1}
\end{equation}
\fi
%\vspace{-.2in}
\begin{eqnarray}
\text{Variables:}&& z_n^{(2)}(i,j), z_n^{(1)}(i,j),  Z_{n,i}, d(i) \forall   i = 1, \cdots, C,  \nonumber \\
&& j = 1, \cdots, (C-1)L+s+d(C), n = 0, \cdots, N \nonumber
\end{eqnarray}

We note that this problem can be solved using an algorithm similar to that for the skip-based streaming. 
One difference as compared to the skip version is that the first step of the no-skip algorithm is to determine the minimum stall time since that is the first priority. Therefore, The No-Skip MP-SVC/ Pref-MP SVC/ Avoid-stalls MP SVC algorithms initially runs initial forward scan with the objective of checking if all chunks can be fetched at least at base layer quality with would be the minimum stall duration (the final startup delay for offline scenario). We will just explain how the MP-SVC is modified to account for the No-Skip Based scenario, and the same pre-processing steps are applied to both No-Pref MP-SVC and its special case No-Stall MP-SVC which replaces the No-Skip MP-SVC and follow the same optimality proof. The only difference is that the startup delay is chosen such that all chunks can be fetched at least at the base layer quality without running into skips.

In the first step, the The No-Skip MP-SVC algorithm runs a base layer forward scan that has the objective of checking if all chunks can be fetched at least at the base layer quality with the current startup delay.  At the deadline of any chunk $i$, if $V_{0,i} < i$, the algorithm increments the deadline of every chunk $ \geq i$ by $1$ and resumes the forward scan (Line 11). The algorithm does not proceed to the next chunk till the condition of $V_{0,i} \geq i$ is satisfied. At the end, the algorithm sets the final deadline of every chunk $i$ to be: $deadline(i)=(i-1)L+s+d(C)$. Therefore, the base layer forward scan achieves the minimum stall duration and brings all stalls to the very beginning since that will offer bandwidth to more chunks and can help increase the quality of more chunks. The detailed steps are described in Algorithm 5.

%Note that increasing the start-up delay will not reduce the penalty term for the duration of stalls, hence, our algorithm will still give the minimum possible duration of stalls. We are only shifting the stalls to the beginning of the video in order to provide a smooth viewing experience and increase the opportunity of fetching higher layers of later chunks. This increased opportunity for fetching higher layers  is indeed essential to prove the optimality of the proposed algorithm. 

 % for the later chunks as it minimizes the switching rate across the chunks. So, if we find any stall, we consider that the stall time is added to the startup delay, so all stalls are brought to the very start. Thus,  there will be no stalls in the middle of playback.

%We call this backward scan No-Skip backward scan algorithm. 

The rest of the algorithm is equivalent to skip version since skips are not allowed only for base layers (higher layers can be skipped). The detailed steps are given in Algorithm \ref{algo:no_skip}. The key difference in the no-skip version is that the startup delay is decided such that there will be no skips. Avoid-Skips algorithm is replaced with avoid-Stalls which works the same way as avoid skips after the initial forward scan that finds the minimum stall duration such that all chunks can be fetched at least at the base layer quality. Therefore, All the lemmas and theorems of skip version are applicable to no-skip version.
%, and the detailed optimality statements are given in Section 

%One thing worth mentioning, since the startup delay is chosen such that all chunks can be fetched at least at base layer quality, we call the algorithm when link 2 is only used to ensure that all chunks can be fetched at least at base layer quality with minimum stall duration, Avoid-Stalls Algorithm. It is similar to Avoid-Skips described for skip version based streaming, except

%\ref{no_skip_apdx}.

\begin{figure}
		%\vspace{-.15in}
    \scalebox{0.95}{
		\begin{minipage}{\linewidth}
			\begin{algorithm}[H]
			\small
				\begin{algorithmic}[1]
				\STATE {\bf Input:}  $\mathbf{Y}=\{Y_n\forall n\}$, $L$, $s$, $C$, $\mathbf{B}^{(k)}=\{B^{(k)}(j)\forall j\}$, $k=1, 2$. 
				\STATE {\bf Output:} $I^{(k)}_n$, $k=1, 2$: set containing the indices of the chunks that can have their $n$th layer fetched over link $k$.	
								
   \STATE  $deadline(i)=(i-1)L+s \quad \forall i$
  
%\STATE $y_n^{(k)}(i,j)=0, \forall n,i,j$	
    \FOR{each layer $n = 0, \cdots, N$}   
     \STATE $R^{(k)}(j)=\sum_{j^\prime=1}^{j} B^{(k)}(j^\prime)$,$\forall j,k$
     %\STATE  , $j=1,....deadline(C)$, $k=1, 2$
	\STATE $V_{n,i}=\sum_{k=1}^2\floor{\frac{R^{(k)}(deadline(i)}{r_n}}$, $\forall i$
   \IF{$n=0$}
       \FOR{$i=1:C$}
       \STATE {\bf If} $i>1$ {\bf then} $d(i)=d(i-1)$
       \WHILE{($V_{n,i} < i$)}
       		\STATE $d(i)=d(i)+1$
		\STATE $deadline(i)=(i-1)L+s+d(i)$
		\STATE $V_{n,i}=\sum_{k=1}^2\floor{\frac{R^{(k)}(deadline(i)}{r_n}}$
	\ENDWHILE
	%\ENDIF
      \ENDFOR
	\STATE  $deadline(i)=(i-1)L+s+d(C), \quad \forall i$ , $i^\prime=1$
	\ELSE
		 \STATE $skip(0)=0$
     \FOR{$i=1:C$}
       \IF{$V_{n,i} < i-skip(i-1)$ or ($i \notin  I_{(n-1)}^{(1)}$ and $i \notin I_{(n-1)}^{(2)}$)}
      \STATE $skip(i)=skip(i-1)+1$
      \ENDIF
      \ENDFOR
      %\ENDFOR
        \STATE Skip the first $skip(C)$ chunks.
        \STATE $i^\prime=skip(C)+1$: the index of the first chunk to fetch
        \ENDIF

    \FOR{$i=i^\prime:C$}
    \STATE $j_k =deadline(i),k=\in \{1, 2\}$
   %\STATE $r^k(j_k) =\min(r^k(j_k), \eta_u),\forall u=1....U$
    	\IF{($n=0$ or $i \in I_{n-1}^{(1)}$ or $i \in I_{n-1}^{(2)})$}
    		%\IF{$max(r^{(1)}(j_1),r^{(2)}(j_2)) \geq Y_n$}
			%\IF{$r^k(j_k) > Y_n$}
				\STATE ${B2}_{k}={B}_{k}, \forall k$, $t=deadline(i-1)$
      				\STATE $[B2_{k},\zeta_i^n(k)]=$ Backward$(k,i, j_k, \mathbf{B2}_{k}, Y_n, t) \forall k$
      				%\STATE $tmp(k)=R^{(k)}(deadline(i-1))+\sum_{k^\prime=1, k^\prime \neq k}^{2} r^{(k^\prime)}(deadline(i-1))$, $k=1,2$
			%\ELSE
				%\STATE $tmp(k)=0$
			%\ENDIF
			\STATE $k_1=\arg\min (\zeta_i^n)$
      			%\IF { $(r^{(k_1)}(deadline(i)) \geq Y_n)$}
      				\STATE $I_n^{(k_1)}=I_n^{(k_1)} \cup i$, $\mathbf{B}_{(k_1)}=B2_{(k_1)}$
      				%\STATE $r^{(k_1)} =R^{(k_1)}$, $Y_n=0$
				%\STATE $Y_n=0$				
      			%\ENDIF

      		%\ENDIF
	\ENDIF
      \ENDFOR
     \ENDFOR
   				\end{algorithmic}
				\caption{Offline No-Skip MP-SVC Algorithm }\label{algo:no_skip}
			\end{algorithm}
		\end{minipage}
	}
		%\vspace{-.55in}
	\end{figure}

\if0
\begin{figure}
		%\vspace{-.15in}
    \scalebox{0.95}{
		\begin{minipage}{\linewidth}
			\begin{algorithm}[H]
			\small
				\begin{algorithmic}[1]
				\STATE {\bf Input:}  $\mathbf{Y}=\{Y_n\forall n\}$, $L$, $s$, $C$, $\mathbf{B}^{(k)}=\{B^{(k)}(j)\forall j\}$, $k=1, 2$. 
				\STATE {\bf Output:} $I^{(k)}_n$, $k=1, 2$: set containing the indices of the chunks that can have their $n$th layer fetched over link $k$.	
								
   \STATE  $deadline(i)=(i-1)L+s \quad \forall i$ 
     \STATE  $r^{(k)}(j)=\sum_{j^\prime=1}^{j} B^{(k)}(j^\prime)$, $j=1,....deadline(C)$, $k=1, 2$
	\STATE $V_i=\sum_{k=1}^2\floor{r^{(k)}(deadline(i)}$: The total number of chunks that can be fetched before the deadline of the $i$-th chunk.
	
    \FOR{each layer $n = 0, \cdots, N$}   
    \IF{$n=0$}
       \FOR{$i=1:C$}
       \STATE {\bf If} $i>1$ {\bf then} $d(i)=d(i-1)$
       \WHILE{($V(i) < i$)}
       		\STATE $d(i)=d(i)+1$
		\STATE $deadline(i)=(i-1)L+s+d(i)$
		\STATE $V_i=\sum_{u=1}^U\floor{r^{(u)}(deadline(i)}$
	\ENDWHILE
	%\ENDIF
      \ENDFOR
	\STATE  $deadline(i)=(i-1)L+s+d(C), \quad \forall i$ 
	$i^\prime=1$
	\ELSE
		 \STATE $skip=0$
     \FOR{$i=1:C$}
       \IF{$V(i) < i$}
       	\STATE $skip=skip+(i-V(i))$
	\ENDIF
      \ENDFOR
        \STATE Skip the first $skip$ chunks.
        \STATE $i^\prime=$the index of the first chunk to fetch
        \ENDIF

    \FOR{$i=i^\prime:C$}
    \STATE $j_k =deadline(i),k=\in \{1, 2\}$
   %\STATE $r^k(j_k) =\min(r^k(j_k), \eta_u),\forall u=1....U$
    	\IF{$n=0$ or $i \in I_{n-1}^{(1)}$ or $i \in I_{n-1}^{(2)})$}
    		\IF{$max(r^{(1)}(j_1),r^{(2)}(j_2)) \geq Y_n$}
		\IF{$r^k(j_k) > Y_n$}
      			\STATE $[R^{(k)},B2_{k}]= Backward(k,i, j_k, \{r^{(k)}(j)\forall j\}, \mathbf{B}_{k}, Y_n)$
      			\STATE $tmp(k)=R^{(k)}(deadline(i-1))+\sum_{k^\prime=1, k^\prime \neq k}^{2} r^{(k^\prime)}(deadline(i-1))$, $k=1,2$
			\ELSE
			\STATE $tmp(k)=0$
			\ENDIF
			\STATE $k_1=\arg\max (tmp)$
      			\IF { $(r^{(k_1)}(deadline(i)) \geq Y_n)$}
      				\STATE $I_n^{(k_1)}=I_n^{(k_1)} \cup i$, $\mathbf{B}_{(k_1)}=B2_{(k_1)}$
      				\STATE $r^{(k_1)} =R^{(k_1)}$, $Y_n=0$
				%\STATE $Y_n=0$				
      			\ENDIF

      		\ENDIF
	\ENDIF
      \ENDFOR
     \ENDFOR
   				\end{algorithmic}
				\caption{Offline No-Skip MP-SVC Algorithm }\label{algo:no_skip}
			\end{algorithm}
		\end{minipage}
	}
		%\vspace{-.55in}
	\end{figure}
\fi

{\bf Online No Skip MP-SVC Algorithm}: In reality, we may not be able to predict the bandwidth for the entire video upfront, and thus all the stalls cannot be moved to the start. Based on the sliding window based approach described in Section \ref{sec:bw_err}, we bring the stalls in the window $W$ to the start of the window. After this change, the rest of the algorithm does similar adaptations to the offline schedules as in Section \ref{sec:bw_err}.

%Therefore, the online algorithm brings the stalls corresponding to the next $W$ seconds to the beginning of the current re-optimization period. Moreover, \textcolor{red}{since we use harmonic mean to predict the future bandwidth, at the beginning there will be no bandwidth prediction at all}. Therefore, the algorithm starts by fetching the first two chunks over the two links at the base layer quality so the achievable throughput can be used to predicted the future bandwidth.

%The remaining is same as the online algorithm of skip version except that base layers can't be skipped.

 % No-skip is special case from skip, when the startup delay is chosen such that all base layers can be fetched (no base layer skips).

%\vspace{-.1in}

\section{Evaluation}
\label{sec:eval}
%3G/HSDPA
\subsection{Setup}

To evaluate the performance of the proposed algorithms, we have built an emulation testbed, which is depicted in Fig.~\ref{fig : emuSys}. The server, and the client nodes are both running in a Linux machine with kernel version 2.6.32, 24 CPU cores, and 32MB memory. The server and client communicate with each other via the loopback interface using the default TCP variant, TCP CUBIC~\cite{ha2008cubic}. Two orthogonal links were created to emulate the behavior of the multi-path streaming. To fetch a layer of a chunk, the client sends a request for that layer to the server via one of the two links, determined by our algorithm, thus the server will send back the data packets of that chunk layer via the same link. The emulation ends when all chunks are received. We record the received time, and the actual number of layers received for every chunk. To introduce bandwidth variations into the emulated experiments, we adopt \texttt{Dummynet}~\cite{dummynet} running on the client side, thus the incoming rates of the TCP packets from the server on the both links (i.e., 1 or 2) will be throttled according to the bandwidth datasets used. The bandwidths profile of the datasets for  both the links are described later in this section. Finally, we introduce a delay of 60ms between the server and the client. %in Section \ref{sec:eval}, from which one profile of each dataset is chosen and are depicted in Fig.~\ref{fig : emuProfiles}.}

\begin{figure}
\centering
%\vspace{-.2in}
\includegraphics[trim=0in 2in 0in 0in, clip,width=0.4\textwidth, height=2in]{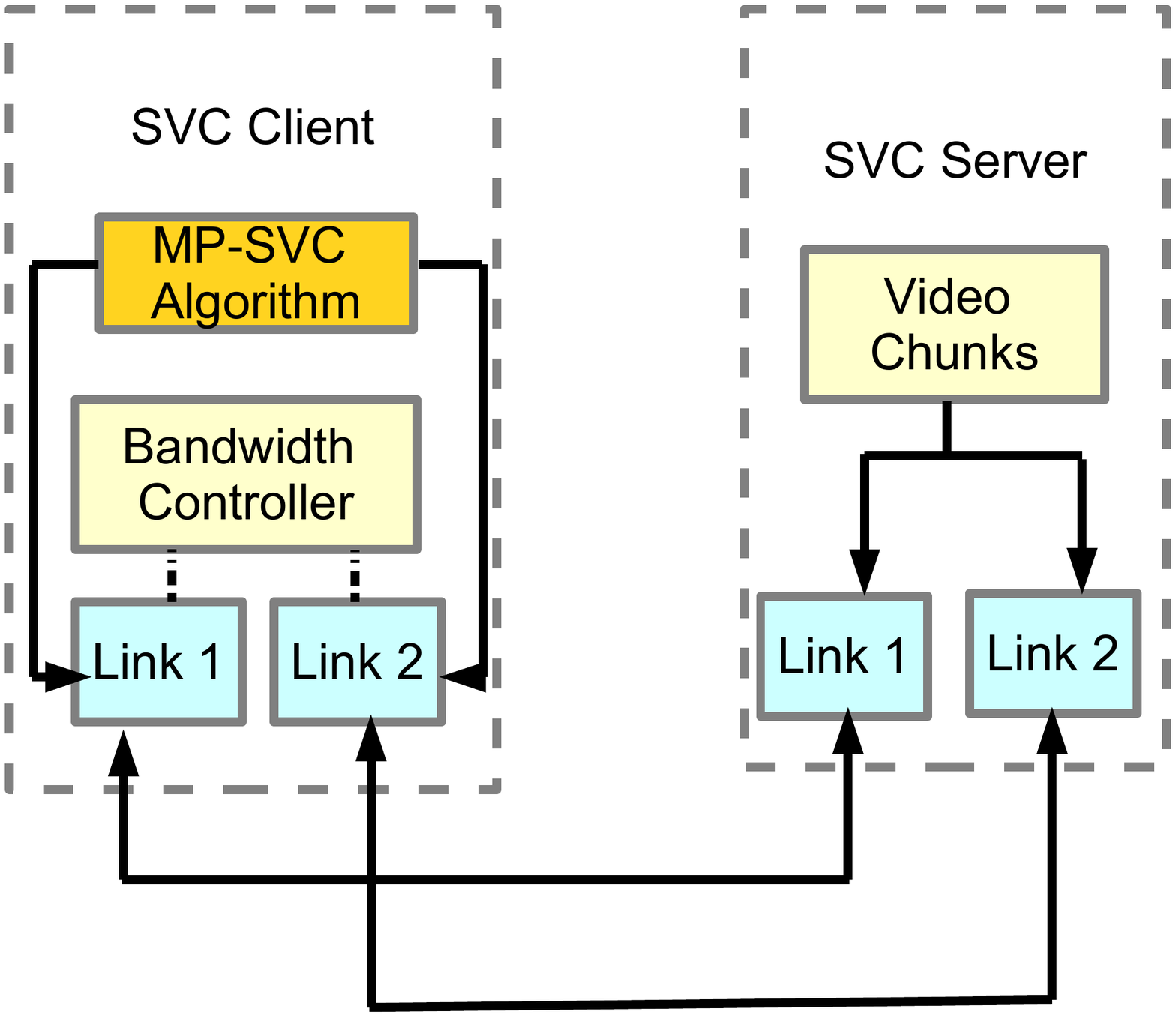}	
\vspace{-.3in}
 \caption{System Architecture for Emulation of MP-SVC.}
 \label{fig : emuSys}
% \vspace{.5in}
 \end{figure}

{{\bf Video Parameters:} The video used for the evaluation is Big Buck Bunny, which is published in~\cite{SVCDataset}. It consists of 299 chunks (14315 frames), and the chunk duration is 2 seconds (48 frames and the frame rate of this video is 24fps).  The video is SVC encoded into one base layer and three enhancement layers. Table~\ref{tab : svc_rates} shows the cumulative nominal rates of each of the layers. The rates on the table are the nominal rates, and the exact rate of every chunk might be different since the video is VBR encoded.  In the table, ``BL'' and ``EL$_i$'' refer to the base layer and the cumulative (up to) $i$-th enhancement layer size, respectively. For example, the exact size of the $i$th enhancement layer is equal to EL$_i$-EL$_{(i-1)}$}. 

{{\bf Bandwidth Traces:} For bandwidth traces, we used two public datasets, representing the two paths. The first dataset (denoted Dataset1)  consists of continuous 1-second measurement of throughput  of a moving device in Telenor's mobile network in Norway \cite{riiser2013commute}. The second dataset (denoted Dataset2), is the  FCC dataset \cite{fccData}, which consists of more than 1 million sets of throughput measurements. Both these datasets have been post-processed in \cite{yin2015control} to give 1000 traces, each of 6-minute length which will be used in this paper for evaluations.}

\begin{figure}
\input{bwStat.tex}
 \vspace{-.1in}
 \caption{Statistics of the two bandwidth traces: (a) mean, and (b) standard deviation of each trace's available bandwidth.}
 \label{fig : bwStatV2}
 \vspace{-.1in}
 \end{figure}

\begin{table}[htb]
	  \vspace{-.1in}
  \centering
  \caption{SVC encoding bitrates used in our evaluation}
  \begin{tabular}{|c|cccc|} \hline
    playback layer & BL & EL1 & EL2 & EL3 \\ \hline
   {nominal Cumulative} rate (Mbps) & 0.6 & 0.99 & 1.5 & 2.075 \\ \hline
  \end{tabular}
  \label{tab : svc_rates}
  \vspace{-.1in}
\end{table}
\if 0
\begin{figure}
\centering
\vspace{.1in}
\includegraphics[trim=0in 0in 0in 0in, clip,width=0.45\textwidth, height=1.25in]{bwStat.eps}	
\vspace{-.1in}
 \caption{Statistics of the two bandwidth traces: (a) mean, and (b) standard deviation of each trace's available bandwidth.}
 \label{fig : bwStatV2}
 \vspace{-.2in}
 \end{figure}
\fi
%Therefore, we ended up having 50 traces, %and the statistics (mean throughput, standard deviation, and CDF of video lengths) of the chosen traces are shown in Fig~Fig~\ref{fig : bwStatV2}.
%We made sure the server has enough bandwidth to the Internet so that the server is not the bottleneck on the end-to-end path.

Dataset1 has higher average bandwidth than Dataset2, but it also has higher variance as it can be seen in Fig~\ref{fig : bwStatV2}(a-b). We use Dataset1 as traces for the first link while Dataset2 as traces for the second link.

 {\bf Algorithm Parameters} For the online version of the proposed algorithm (MP-SVC, and MP-Pref-SVC, described in Section \ref{sec:bw_err} and \ref{sec:pref}), we set our algorithm's parameters as follows. We choose $W=10$ chunks, $B_{min}=4$ seconds, and $B_{max}=2$ minutes (60 chunks). We tried different buffer sizes, and the comparisons between different algorithms were qualitatively the same. Moreover, we choose $\alpha=2$ seconds.  We use the harmonic mean of the last $10$ seconds ($\beta=10$ seconds) to predict the future bandwidth. In other words, the predicted bandwidth for the entire window is set to the value of the harmonic mean of the last 10 seconds (or less in the start, when less data is available). Since there is no prediction at the beginning, the first two chunks are fetched at base layer quality where the first chunk is fetched over the first link, and the second one is fetched over the second link. For the link $1$ preference case, we assume $n_2=0$. Thus, link $2$ is only used to avoid skips, and not to increase the quality of chunks beyond the base layer. Similarly for the No-Skip version, the key use of link $2$ is to avoid/reduce stalls while not to improve further quality beyond the base layer. The proposed online algorithms without and with link $1$ preference  (MP-SVC, and MP-Pref-SVC, respectively) for both skip and no-skip based streaming scenarios are compared with the following baseline algorithms.

%We include the optimal (offline) algorithm described in section \ref{skipalgo}. Moreover, we include the MPTCP variants of both online and offline MP-SVC. We describe the two baselines and the MPTCP algorithms bellow:

{\bf Multi-path, and SVC version of  Buffer-based Approach (MP-BBA)}: The authors of \cite{BBA} proposed a buffer-based algorithm, BBA,  for a single path and No-Skip streaming. BBA adjusts the streaming quality based on the playback buffer occupancy.
Specifically, the quality depends on two thresholds. If the buffer occupancy is lower (higher) than the lower (higher) threshold, chunks are fetched at the lowest (highest) quality.  If the buffer occupancy lies in between the two thresholds, the buffer-rate relationship is determined by lower thresholding the linear function between these extreme points to the available quality levels. We use the $30$ and $90$ seconds as the lower, and upper thresholds on the buffer length respectively. However, the standard BBA algorithm is a single path algorithm. In our proposed variant, MP-BBA, the quality is decided as per the BBA algorithm. In order to split layers across the two paths, the layers are split using the choice that minimizes the completion time of the chunk. Finally, we adapt this algorithm to the skip version where  a chunk  not  downloaded before its deadline  is skipped. The same modification to skip version is used for other no-skip versions described later. %We note that the approach uses the SVC flexibility to split the layers across the two paths and thus improves on the AVC version where each chunk will be fetched on only one path. 

%{achieves the highest throughput is considered was unclear. }

{\bf Multi-path version of Buffer-based Approach using Multipath TCP (MPTCP-BBA)}: This algorithm decides the quality based on BBA algorithm as described in MP-BBA. However, these quality decisions are fetched using Multi-path TCP and the decision of splitting into paths is no longer needed.

{\bf MPTCP-Pref-BBA \cite{han2016mp}}: The authors of \cite{han2016mp}  proposed Multi-path Dash (MP-DASH) which is an Adaptive MPTCP Video Streaming Over
Preference-Aware multi-path that takes the chunk quality decision based on BBA algorithm \cite{BBA}. Even though \cite{han2016mp} used AVC, the same decisions can be used for SVC.

{\bf MSPlayer~\cite{chen2016msplayer}}: 
MSPlayer takes quality decisions of the next two chunks based on the bandwidth prediction. Odd chunks are fetched on the first link, while even chunks are fetched on the second link.  If the current bandwidth measurement of the slow link (link with lower predicted throughput) is larger than ($1 + \delta$) times the estimated value, the chunk size is doubled and rounded to the nearest feasible chunk size. Similarly, if the current value is less than ($1 - \delta$) times  the estimated value, the chunk size is halved and rounded to the nearest  chunk size. The size of the chunk which is to be downloaded using the fast link is adjusted based on the throughput ratio. In other words, its size is equal to the size of the chunk fetched over the slow link times the ratio of the predicted bandwidths of the two links. Finally, it is rounded to the nearest  chunk size. We set $\delta$ to its default value ($5\%$) \cite{chen2016msplayer}.  We compare the proposed algorithms with MSPlayer only in case of equal priority since there is no known MSPlayer version where one link is more preferred over the other.

{\bf MPTCP-Festive}: We use the default setting of Festive algorithm as described in \cite{jiang2012improving} over the two links combined using multi-path TCP.

{\bf MPTCP-Pref-Festive \cite{han2016mp}}: The authors of \cite{han2016mp} also considered a Preference-Aware algorithm based on Festive algorithm \cite{jiang2012improving}. The decisions can be used over SVC.

%This algorithm decides the quality based on BBA algorithm as described in MP-BBA. However, these quality decisions are fetched using Multipath-TCP and the decision of splitting into paths is no longer needed. 

% The only difference with the MP-BBA is that now each layer can be fetched partially with the WiFi or LTE. 
%Thus, now in the second step,  MPTCP aggregates the throughput of the two links into one and determines the portion to be fetched using LTE and the one with the WiFi. %, so for the application layer, there is only one link with a throughput that is equal to the sum of the WiFi and LTE throughputs.

{\bf MPTCP-SVC and MPTCP-Pref-SVC Algorithms}: These are the variants of the online MP-SVC and MP-Pref-SVC algorithm, where Multipath TCP can be used, as described in Section \ref{sec:bw_err}. 

% MPTCP version of the previous one in which MPTCP takes care of splitting the packets over the two links, and from application layer prospective, the two links look like only one link with total throughput equal to the sum of the two link throughputs (aggregated bandwidth).

{\bf off-MP-SVC and off-MP-Pref-SVC Algorithms}: These are the offline MP-SVC and MP-Pref-SVC as described in Section \ref{skipalgo} and their No-Skip variants described in section \ref{no_skip} where a genie-aided perfect bandwidth prediction is known for the entire video duration thus forming an upper bound of the performance (as measured by the proposed objective) of any online algorithm that splits layers on two paths. 

% It is the offline algorithm with perfect prediction for the whole period of the video (theoretic upper bound for the application layer based algorithms). The algorithm was described in section ~\S\ref{sec:alg}. 

{\bf off-MPTCP-SVC and off-MPTCP-Pref-SVC Algorithms}: These algorithms run MPTCP-SVC with a genie-aided perfect bandwidth prediction is known for the entire video duration thus forming an upper bound of the performance (as measured by the proposed objective) of any online algorithm which may or may not use Multipath TCP.
 
% MPTCP version of the previous one (theoretic upper bound for the MPTCP based algorithms).

%We compare the above two baseline approaches with our online algorithm, and we use the offline (optimal) as a reference. In both our online algorithm and MSPlayer, we use harmonic mean of the past 10 seconds for bandwidth prediction.

%\textcolor{red}{Finally, since we have same results for skip and no-skip based streaming, we will explain the results in the context of skip version and all observations are valid for the no-skip version.}

\begin{figure}
%\vspace{-.15in}
\centering
\includegraphics[trim=0.3in 0in .1in 0in, clip,  width=.48\textwidth]{./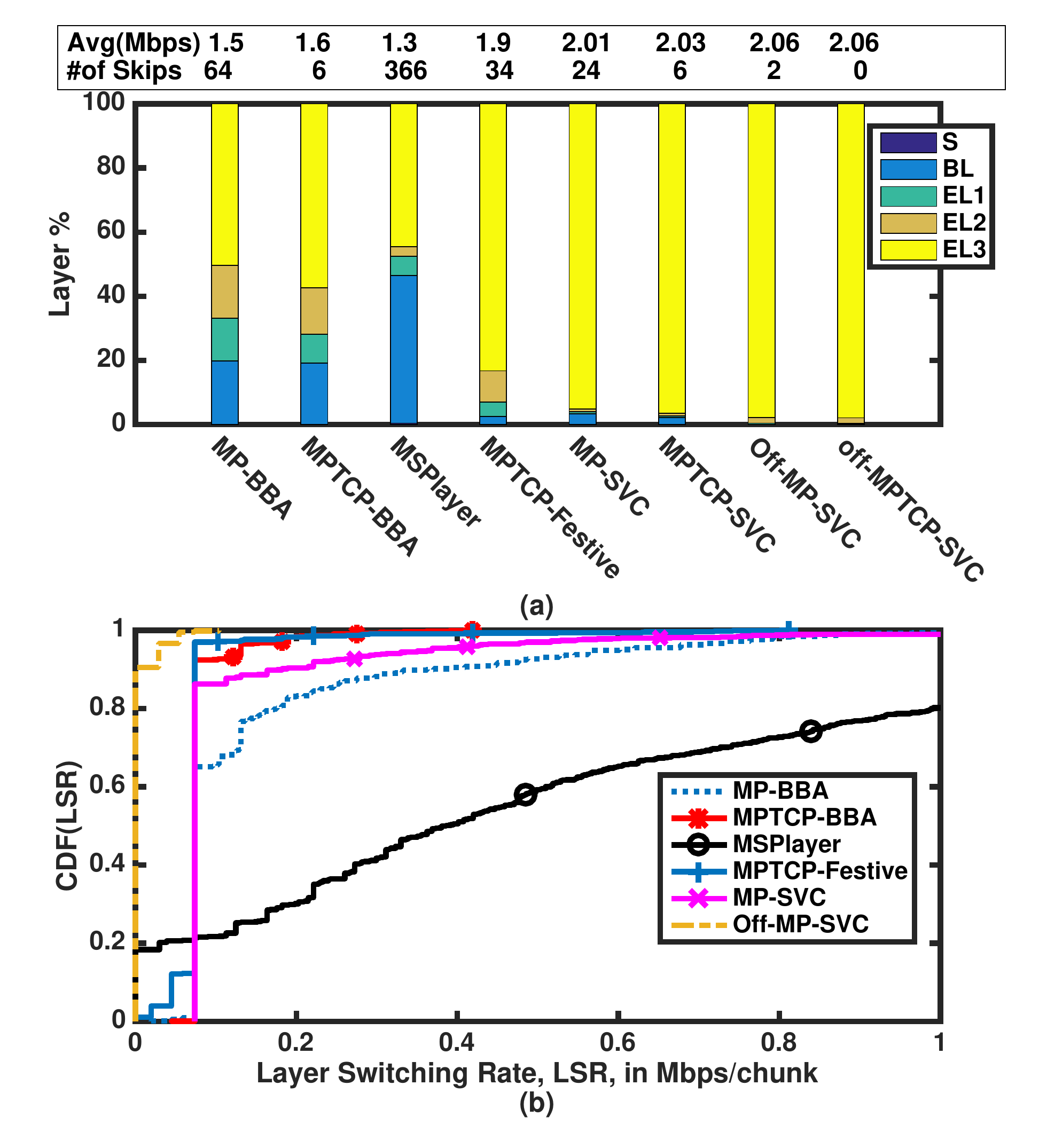}	
%\vspace{-.7in}
 \vspace{-.3in}
 \caption{ Skip based Streaming without link preference: (a) layer distribution, and (b) CDF of layer switching rate.}
 \label{fig : BCompeq}
% \vspace{-.1in}
\end{figure}

\begin{figure}
%\vspace{-.15in}
\centering
\includegraphics[trim=0.3in 0.1in .1in 0in, clip,  width=.48\textwidth]{./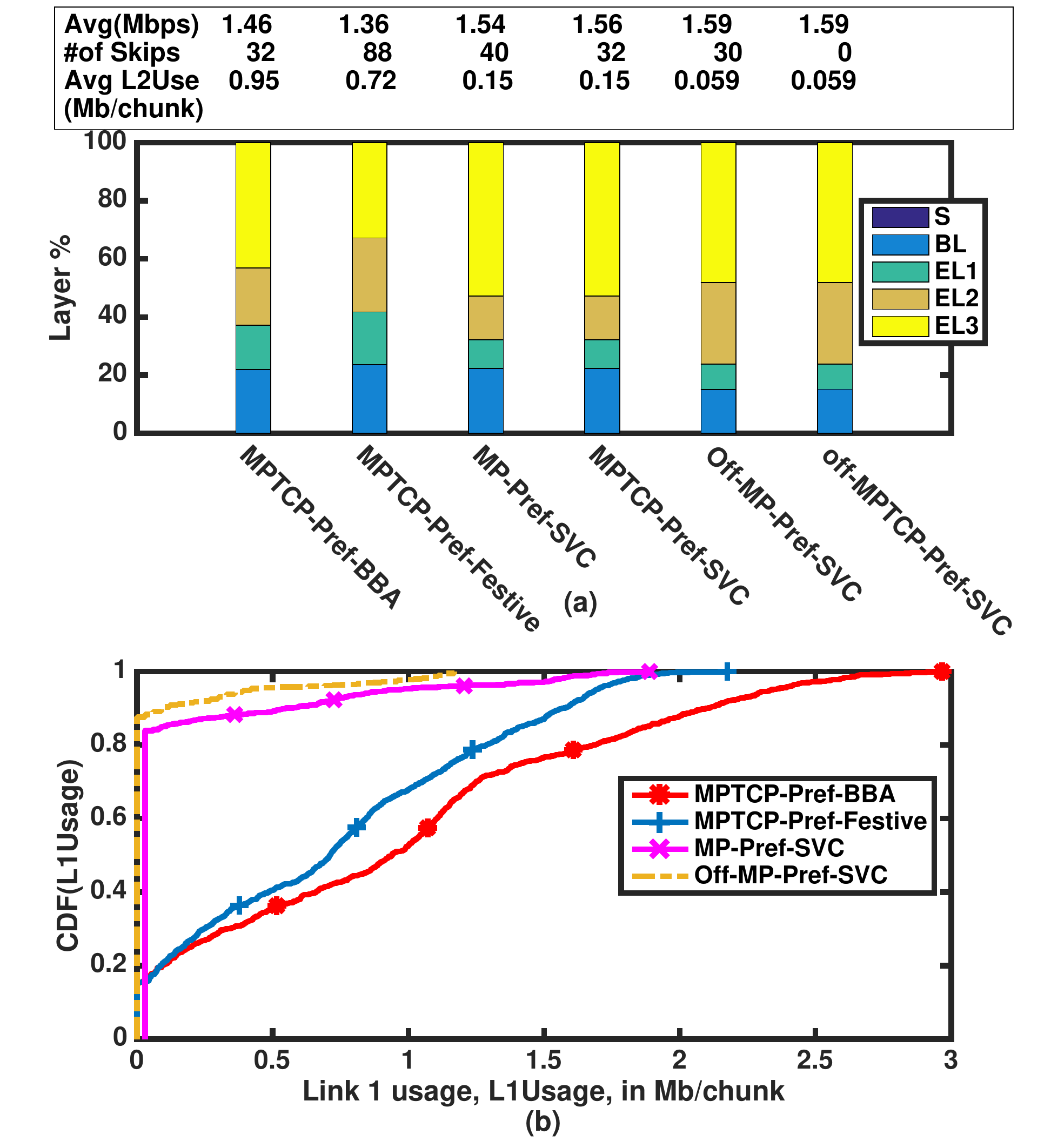}	
%\vspace{-.7in}
 \caption{ Skip based streaming with Link Preference: (a) layer distribution,  and (b) CDF of link 1 usage.}
 \label{fig : BComppref}
 %\vspace{-.2in}
\end{figure}

\subsection{Skip Based MP-SVC Algorithm Without Preference}
\label{sec:eval_skip}

%In this subsection, we will evaluate the proposed algorithm with a comparison to the baseline approaches in the skip based scenario. %For all algorithms (both the baseline approaches and our methods), we assume a startup delay of 5 seconds. We do not set any lower buffer threshold for our online algorithm.  All results are based on the two sets of bandwidth traces described at the beginning of~\S\ref{sec:eval}. We use dataset 1 for WiFi and dataset 2 for LTE.

%{\bf All parameters - buffer occupancy and $B_{min}$ need to be somewhere}

%
In this subsection, we will evaluate MP-SVC 
%\textcolor{red}{the two variants of the proposed algorithm (with and without MPTCP)} 
with a comparison to the baseline approaches in the skip based scenario. The results are shown in Fig.~\ref{fig : BCompeq}. The startup delay is chosen to be 5 seconds. Both the off-MP-SVC and off-MPTCP-SVC represent the optimal fetching policy without and with use of multi-path TCP and  thus represent the best possible strategies for the objective.

%for no MPTCP and with MPTCP respectively, so there is no online algorithm that shares the same objective can do any better.
%We first note that though MP-SVC is less flexible than MPTCP-SVC due to the restriction of choosing a layer on only one of the links; yet, the Quality of Experience obtained in the MP-SVC is almost the same compared to the MPTCP-SVC.   

Fig.~\ref{fig : BCompeq}-a  shows the probability mass function of the number of chunks fetched at the different qualities (S=skips, corresponding to the chunks which were not fetched). The average playback rate among all traces, and the total number of skipped chunks among all videos is displayed on the top of Fig.~\ref{fig : BCompeq}-a. We first see that MP-SVC significantly outperform the algorithms that do not use multi-path TCP (MP-BBA, and MSPlayer). For instance, about $50\%$, and $45\%$ of the chunks are fetched at the $E_3$ quality in MP-BBA, and MSPlayer respectively. In contrast, the proposed algorithm MP-SVC fetches about $95\%$ of the chunks at the $E_3$ quality.  Further, MSPlayer runs into highest number of skips  ($366$ chunks were skipped in total, which corresponds to $12$ minutes of stall duration) while MP-BBA has 64 skipped chunks. In contrast, MP-SVC only has 24 skips, corresponding to less than $1$ minutes of video playback. Another thing to note is that though MP-SVC is less flexible than MPTCP-SVC, MPTCP-BBA, and MPTCP-Festive due to the restriction of choosing a layer on only one of the links, yet, the Quality of Experience obtained in the MP-SVC is almost the same as compared to the MPTCP-SVC. Moreover, MP-SVC is  better than MPTCP-Festive both in terms of avoiding skips and achieving higher average playback rate, and it outperforms MPTCP-BBA in the average quality (achieves $26\%$ higher) with  slightly higher number of skips. MP-SVC incorporates  bandwidth prediction and the deadline of the chunks into its optimization based decisions, prioritizes the later chunks, and re-considers the decisions every 2 seconds. These properties help MP-SVC be adaptive to different bandwidth regimes and variations in the bandwidth profiles.

%We see that MP-SVC and MPTCP-SVC significantly outperform the baseline algorithms and fetch chunks at relatively higher qualities with negligible number of skips. For example, only $23\%$, $28\%$, and  $38\%$ of the chunks are fetched at the $E_3$ quality in MP-BBA, MPTCP-BBA and MSPlayer respectively.  Moreover, MSPlayer skips about $10.5\%$ of the chunks. On the other hand, MP-SVC and MPTCP-SVC fetch $64\%$ of the chunks at the $E_3$ quality with $2\%$ and $1.5\%$ of skips respectively. MP-SVC incorporates  bandwidth prediction and the deadline of the chunks into its optimization based decisions, prioritize the later chunks, re-consider the decisions every 2 seconds, and adjusts the decision every $\alpha$ seconds.  These properties helps MP-SVC be adaptive to different bandwidth regimes and variances.  %Therefore, Fig.~\ref{fig : BComp}-a clearly shows the robustness and the adaptability of MB-SVC to various bandwidth regimes and variances.

%Fig.~\ref{fig : BComp}-b shows the CDF of the average playback rate per bandwidth trace. We note that  MP-SVC and MPTCP-SVC yield significant improvement on the average playback rate as compared to the base-line algorithms. For example, MP-SVC yields  $33\%$, $25\%$, and $38\%$ higher average play-back rate compared to the MP-BBA, MPTCP-BBA, and MSPlayer respectively.
  
 Fig.~\ref{fig : BCompeq}-b  shows  the distribution of the layer switching rates (LSR), which is defined as: $$\frac{1}{M}\sum_{i=2}^C |E\{X(i)\}-EX\{(i-1)\}|$$ where $C$ is number of chunks and $E\{X(i)\}$ is the expected play-back rate of $i$-th chunk, so if the $i$-th chunk is fetched at $n$-th enhancement layer quality, $E\{X(i)\}$ will be equal to the nominal cumulative rate of the $n$-th enhancement layer.  Intuitively, LSR quantifies the frequency of the layer change across the chunks and should be low for better quality of experience (QoE). MSPlayer performs poorly in terms of switching rate as compared to the other algorithms since it does not consider the buffer length and  doubles or halves the quality based on the predicted bandwidth. MP-BBA also have higher switching rate. MPTCP-Festive is slightly better in terms of switching rate, while worse in the average quality and the skip durations.

%We see that, MP-SVC, and MPTCP-SVC experience switching rate that is very similar to the one achieved by the optimal algorithms. 

 Finally, we compare the online algorithms MP-SVC and MPTCP-SVC with the offline versions of them. We note from Fig.~\ref{fig : BCompeq} that although, MP-SVC and MPTCP-SVC use prediction window of only 10 chunks (short window), and  harmonic mean of the last $\beta=10$ seconds for predicting future bandwidth, they achieve a fetching policy that is very close to the one achieved by the offline algorithm (perfect prediction for entire video duration) with slightly  higher switching rate and a small increase in number  of  skips. The slight increase in switching rate for online schemes is in part since the first few chunks are downloaded in the base layer (because there is no bandwidth prediction initially), and a single jump from base layer to $EL3$ contributes $1475/80$ kbps/chunk, which is $\approx .02$ Mbps/chunk, to the LSR metric.  Prioritizing later chunks, re-considering the decisions every 2 seconds, and adjusting to prediction error, all help reducing the gap between the online and the offline  algorithm.

\subsection{Skip Based MP-SVC Algorithm with Link Preference }
\label{sec:eval_skip_pref}
In this subsection, we will evaluate the proposed algorithms, MP-Pref-SVC and MPTCP-Pref-SVC, and compare them to  the link preference based baseline approaches in the skip based scenario. The results are shown in Fig.~\ref{fig : BComppref}. The startup delay is chosen to be 5 seconds. The algorithms off-MP-Pref-SVC and off-MPTCP-Pref-SVC represent the  fetching policies when the bandwidth profiles are known non-causally without and with utilization of multi-path TCP, respectively.  

Fig.~\ref{fig : BComppref}-a  shows the probability mass function of the number of chunks fetched at the different qualities. We first note that there is no algorithm that fetches a chunk from a single path to compare with. Thus, we compare the algorithms with the versions that use multi-path TCP even though that is a disadvantage to MP-Pref-SVC since it does not use multi-path TCP. We see that MP-Pref-SVC achieves $15\%$ higher average playback rate than MPTCP-Pref-Festive with less number of skips, and about $6\%$ higher average playback rate than MPTCP-Pref-BBA with slightly higher number of skips. On the other hand, MPTCP-Pref-SVC algorithm outperforms both MPTCP-BBA and MPTCP-Festive both in terms of the achievable average playback rate and the number of skips. However, when one of the  link is preferred, not only the average playback rate is important, but the bandwidth usage of the less preferable link (link 2) is also important. 
%
%also how much bandwidth of link 1 is used for the achievable playback rate since the objective is to avoid skips, maximize playback rate, and minimize the link 1 usage. 
%
Thus, in Fig.~\ref{fig : BComppref}-b, we plot the CDF of link 2 usage (L2Usage), and the total amount of link 2 usage per chunk is also displayed above the figure. We see that both MP-Pref-SVC and MPTCP-Pref-SVC  use lower amount of link $2$'s bandwidth as compared to MPTCP-BBA and MPTCP-Festive. In about $80\%$ of the bandwidth traces, MP-Pref-SVC and MPTCP-Pref-SVC used link $2$ to fetch only one chunk, and that chunk is necessary at the beginning to predict the  available bandwidth of link 2 (the offline algorithms did not use link $1$ for $80\%$ of the traces). However, both MPTCP-BBA and MPTCP-Festive used link $2$ to fetch more than 1Mb/chunk for $50\%$ of the bandwidth traces. Thus, the proposed algorithms give comparable or better average qualities than the baselines with significantly lower utilization of link $2$.

 \subsection{No-Skip  MP-SVC and MP-Pref-SVC Algorithms}
 \begin{figure}
%\vspace{-.15in}
\centering
\includegraphics[trim=0in 0in .1in 0in, clip,  width=.48\textwidth]{./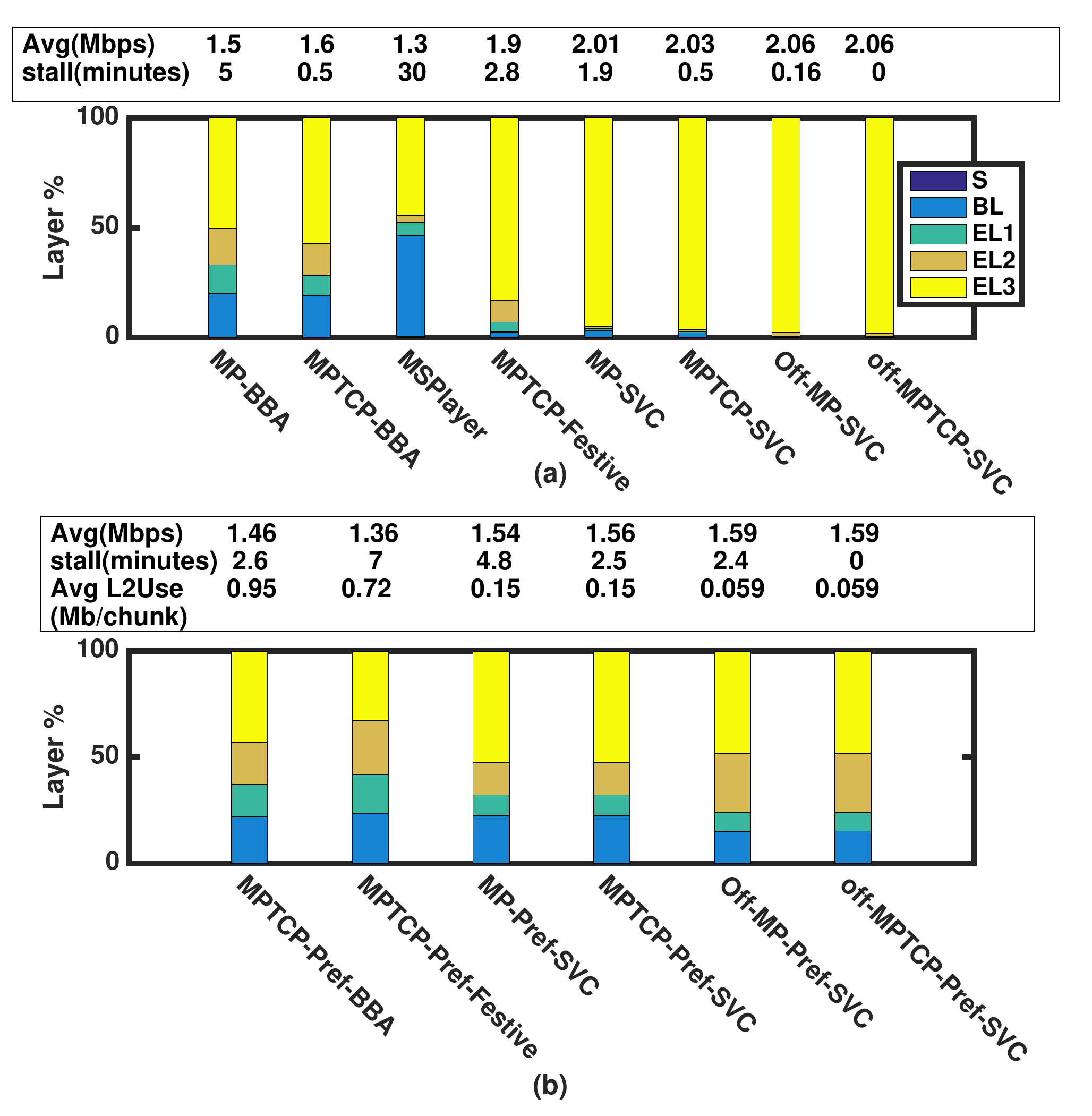}	
%\vspace{-.7in}
 \caption{(a) No-Skip based streaming without link preference,  (b)  No-Skip based streaming with link preference.}
 \label{fig : noSkipComp}
 %\vspace{-.2in}
\end{figure}

In this subsection, we evaluate our proposed algorithms for the No-Skip based scenario. The initial startup delay is chosen to be the minimum of 5s and the download time of the first chunk while the other parameters are the same as in the skip-based version. The results are shown in Fig.~\ref{fig : noSkipComp}. The average playback qualities and the overall stall durations over all videos are shown above the sub-figures.  The first thing to note is that the layer distribution results are very similar to that in the skip based scenario for all the algorithms. The main difference in the no-skip version is that since there are no skips, the video may have multiple stalls. Thus, in Fig.~\ref{fig : noSkipComp}, we give the duration of stalls (re-buffering time) rather than skips. We omit the c.d.f. of the layer switching rate  and the LTE usage figures since the results are qualitatively similar to that in the skip version. We however give the average link $1$ usage above Fig. \ref{fig : noSkipComp}(b).  We see that   MPTCP-SVC and MPTCP-Pref-SVC  outperform all other baseline algorithms in all the considered metrics both when the link $1$ is preferred or when both the links have similar preference. Further, MP-SVC and MP-Pref-SVC has slightly higher stalls than the MPTCP variants with the other metrics being very close. It can also be seen that MP-SVC  outperforms the baseline algorithms that do not use MPTCP (MP-BBA and MSPlayer) in all the considered metrics.

%In general since the bandwidth profiles are high enough, very small stall duration is experienced by all algorithms (except MSPlayer which runs into 30 minutes of stall duration). Up to 7 minutes (MPTCP-Pref-Festive stall time) is considered small since  it was experienced in 88 hour of video streaming. However, MP-SVC, MP-Pref-SVC, MPTCP-SVC, and MPTCP-Pref-SVC were able to achieve such a negligible stall duration with highest achievable playback rate and less usage of link 1 in case of link preference.

%First thing to note from Fig.~\ref{fig : BComp} and Fig.~\ref{fig : BComp2} is that no-skip based streaming results are very similar to those found on skip based streaming.  Where MP-SVC and MPTCP-SVC outperform all other algorithms in the equal priority scenario, and MP-Pref-SVC and MP-Pref-MPTCP outperforms all link preference based streaming. Since, in no-skip based there will b no chunk skips, we show in both Fig.~\ref{fig : BComp}-a and Fig.~\ref{fig : BComp2}-b the stall time of every algorithm that was experienced over 1000 bandwidth traces every one is with length of 80 chunks of 4 seconds (320s video length, in total 80,000 chunks of 4 seconds)
%\label{sec:eval_noskip}

%\input{noskipsims}

%\vspace{-.1in}
%\input{simulation_b}
%\vspace{-.1in}
%\input{simulation_paras}

%\input{emulation}

%\vspace{-.1in}
%\input{emulation}

%\vspace{-.1in}

\section{Conclusions}\label{sec:concl}
%\vspace{-.1in}

This paper provides a preference-aware adaptive streaming algorithm for a video encoded with Scalable Video Coding (SVC) over multiple paths, where each layer of every chunk can be fetched from only one of the paths. The problem of optimizing the user's quality of experience (QoE) subjected to the available bandwidth, chunk's deadlines, and link preference is formulated as a non convex optimization problem. It is shown that this non convex problem can be solved optimally with a complexity that is polynomial in the video length for some practical and special cases. Further, an online algorithm is proposed where 
several challenges including  bandwidth prediction
errors are addressed. Extensive emulations with real traces of public dataset reveal the robustness of our schemes and demonstrate their significant
performance improvement compared to other state-of-the-art multi-path algorithms.

\bibliographystyle{IEEEtran} %
%\scriptsize
\bibliography{refs,bib,multipath_feng}

%\newpage

\appendices
\section{Proof of Results in Section 3.4} \label{proof_skip}%\ref{optim_skip}
%\vspace{-.1in}

\subsection{Proof of Lemma \ref{them:skip1}}
\label{apdx_skip1}
The forward scan for base layers decides to skip a base layer of a chunk $i$ only if the total bandwidth up to a deadline of a chunk $j \geq i$ is not enough to fetch all chunks $1$ to $j$. Since, the bandwidth up to the deadline of the $j^{th}$ chunk is not enough to fetch all chunks $1$ to $j$. Any other feasible algorithm, \ie algorithm that does not violate the bandwidth constraint, must have a skip for a chunk $i^\prime \leq j$. Thus, for every base layer skip of the proposed algorithm, there must be a base layer skip or more for any other feasible algorithm. The forward scan repeats for every enhancement layer in order with the remaining bandwidth. For every layer $n$, the proposed algorithm decides to skip the $n^{th}$ layer of a chunk $i$ in two cases:
\begin{itemize}
\item If the $(n-1)^{th}$ layer of this chunk is not decided to be fetched. Any feasible algorithm has to skip the $n^{th}$ layer of a chunk if its $(n-1)^{th}$ layer is not decided to be fetched; otherwise constraint~(\ref{equ:c2eq1}) will be violated.

\item The remaining bandwidth after excluding whatever reserved to fetch layers $1$ to $n-1$ for all chunks is not enough to fetch the $n^{th}$ layer of a chunk $j \geq i$. Thus, any feasible algorithm will decide to skip an $n^{th}$ layer of a chunk $i^\prime < j$. Otherwise the bandwidth constraint will be violated
\end{itemize}

Therefore, for any $n^{th}$ layer skip of the proposed algorithm, there must be an $n^{th}$ layer skip or more for any feasible algorithm. Thus, the algorithm achieves the minimum number of skips and that concludes the proof.

\subsection{Proof of Lemma \ref{them:skip2}}

We note that the proposed algorithm brings all $n$th layer skips to the very beginning (if necessary, it skips the earliest ones). We note from Section \ref{apdx_skip1} that if the ordered set of
$n$th layer skips for the backward algorithm is $i_1, i_2, \cdots, i_H$ and for any feasible algorithm with same number of $n$th layer skips is $j_1, j_2, \cdots, j_H$, then $i_k \le j_k$ for  all $k = 1, \cdots, H$.
Earlier $n$th layer skips provide bandwidth that can be available for more of the future chunks than any other choice since it comes before the deadline of more chunks thus proving the result in the statement of the lemma. %Any other feasible algorithm with larger number of $n$th layer skips will achieve smaller objective when $\gamma$ satisfies \eqref{basic_gamma_1}, thus showing that it will not be optimal. 

\if0
\subsection{Proof of Lemma \ref{lem:skip1} }\label{proofskipl1}
We first note that using the MP-SVC algorithm, $n$th layer of chunk $i_s$ is skipped in two scenarios, which are described as follows.

\noindent {\bf Case 1: } If $(n-1)th$ layer of the chunk $i_s$ is not fetched, then $(n)th$ layer of the chunk $i_s$ can not be fetched due to constraint (\ref{equ:c2eq11}).

\noindent {\bf Case 2: } The residual bandwidth in any of the connections, ignoring the bandwidth used to fetch up to ($n-1$)th layer of chunks less than $i_s$ (lower layer decisions) and the bandwidth used to fetch up to $n$th layer of chunks greater than $i_s$ (current layer decisions for later chunks), is not enough to fetch chunk $i_s$. In this case, the bandwidth is not enough to obtain the chunk $i_s$ when using backward algorithm.

 Mathematically, let  $\pi_{p}^{(1)}$ and $\pi_{p}^{(2)}$ be the  bandwidth  reserved to fetch up to $(n-1)$th layer of chunks $< i_s$ over first and second links respectively, and $\pi_{l}^{(1)}$ and $\pi_{l}^{(2)}$ be the bandwidth reserved to fetch up to $n$th layer of chunks $> i_s$ over links 1 and 2 respectively. Since decision of the current layer size starts from back, then the $n$th layer of the chunk $i_s$ is skipped when: 
\begin{equation}
\sum_{j=1}^{deadline(i_s)}B^{(k)}(j) - \pi_{p}^{(k)} -\pi_{l}^{(k)}< Y_{n} \forall k=1, 2
%\sum_{i^\prime \neq i_s}\sum_{j=t(i_s+1)}^{deadline(i_s)}(x(i^\prime,j)) < X_{n-1}+Y_{n}
\end{equation}
%Where, $X_{n-1}=\sum_{n=0}^{n-1}Y_n$, and $x(i^\prime,j)$ is the fetching policy of every chunk $i^\prime \neq i_s$.

We note that the MP-SVC algorithm assigns a layer to the link that leaves more bandwidth for fetching the earlier ones. Therefore, the available bandwidth to fetch $n$th layer of a chunk can't be higher than its value using MP-SVC algorithm unless $n$th layer of a chunk with higher index was skipped. We will now show that MP-SVC algorithm has no larger skips at any layer (given previous layer quality decisions) as compared to any other feasible algorithm. 

Skips which are due to Case 1 will be the same for any feasible algorithm. Thus, we do not consider skips that are of Case 1.  For the skip of Case 2, we first consider the last skip by our algorithm. We note that since the MP-SVC algorithm will fetch a $n$th layer of any chunk if at all possible, if there is a skip at $i_s$, any algorithm will have a skip for a chunk $i\ge i_s$. Further note that skipping $nth$ layer of chunk $i_s$ as compared to $n$th layer of chunk $i$ will allow for more bandwidth till deadline of the earlier chunks. Thus, if there is another skip using the MP-SVC algorithm, there must be another skip at or after the time of that skip in any other algorithm. Thus, we see that the number of skips in any feasible algorithm can be no less than the proposed algorithm. 

%\vspace{-.1in}

\subsection{Proof of Theorem \ref{theorem: theorem11}}

The result follows by recursive use of Lemmas \ref{lem:skip1}  and \ref{lem:skip2}. Use of  Lemma \ref{lem:skip1} shows that the proposed algorithm is optimal for base layer skips.  According to lemma \ref{lem:skip2}, running  MP-SVC algorithm offers the maximum bandwidth per chunk for next layer decisions among all feasible algorithms with same number of skips.  Therefore, the bandwidth profile that is passed to $E_1$ scan is the maximum per chunk. Running MP-SVC algorithm on $E_1$ layer would produce optimal $BL$ and $E_1$ decisions by Lemma \ref{lem:skip1}. Keep scanning sequentially up to $N$th layer would yield optimal solution to the optimization problem~(\ref{equ:eq1}-\ref{equ:c7eq1}) when $\gamma$ satisfies \eqref{basic_gamma_1}, and that concludes the proof.

\fi

\section{Proof of the optimality of Avoid-Skips MP-SVC}\label{apdx_pref}
\subsection{Proof of Lemma \ref{lem:avoid_skips} }\label{proofPref0}
From theorem \ref{them:skip1} in section \ref{skipalgoEqual}, we know that MP-SVC algorithm achieves the minimum number of base layer skips. However, Pref MP-SVC does not fetch less base layers, it only post processes the MP-SVC decisions in order to move as much as possible of base layers that have been decided to be fetched over link 2 to link 1. Therefore, the Pref MP-SVC will never fetch less number of base layers than MP-SVC and that concludes the proof.

\subsection{Proof of Lemma \ref{lem:avoid_skips2} }\label{proofPref1}
If there are two algorithms (1 and 2) achieve the same number of base layers and algorithm 2 fetches $\Delta$ more base layers than algorithm 1 by link 2. Then, the same $\Delta$ chunks should have been fetched by link 1. Let's denote the number of base layers fetched by link 1 in algorithm 2 by $H_1$, and the number of base layers fetched by link 2 in algorithm 2 by $H_2$. Hence, the number of base layers fetched by link 1 using algorithm 1 is $H_1+\Delta$, and the number of base layers fetched by link 2 in algorithm 1 is $H_2-\Delta$. Moreover, let's denote the achieved objective for the higher layers of algorithm 1 by $D_1$ and algorithm 2 by $D_2$ where $D_2 > D_1$. Therefore, the total objective of algorithm $1$ is:
\begin{equation}
\lambda_0^1(H_1+\Delta)+\lambda_0^2(H_2-\Delta)+D_1
\end{equation}
and the total objective of algorithm $2$ is:
\begin{equation}
\lambda_0^1H_1+\lambda_0^2H_2+D_2
\end{equation}

However, when $\lambda$s satisfy (\ref{basic_gamma_1_0}) the following holds true::
\begin{equation}
\lambda_0^1\Delta > \lambda_0^1H_1+\lambda_0^2H_2+D_2 
\end{equation}

Therefore, 
\begin{equation}
\lambda_0^1H_1+\lambda_0^2H_2+D_2 < \lambda_0^1(H_1+\Delta)+\lambda_0^2(H_2-\Delta)+D_1  
\end{equation}
and that concludes the proof.
\if0
%the following will always hold true
Since $\lambda_0^{(1)} > \lambda_0^{(2)} $, the following holds true:
\begin{equation}
(\lambda_0^{(1)}-\lambda_0^{(2)}) \Delta >  0
\end{equation}
Which means that algorithm 1 will achieve higher objective.
\fi
\subsection{Proof of Lemma \ref{lem:avoid_skips3} }\label{proofPref2}

Pref MP-SVC re-runs MP-SVC for link 1 only and for chunks that were initially decided to be fetched by link 2 using the remaining bandwidth of link 1. From theorem \ref{them:skip1} in section \ref{skipalgoEqual}, we know that MP-SVC will fetch the maximum number of base layers. Therefore, Pref MP-SVC moves the maximum number of layers from link 2 (less-preferable) to link 1 (more-preferable links). Hence minimum number of base layers will be fetched over link 2. 

\subsection{Proof of Lemma \ref{lem:avoid_skips4} }\label{proofPref3}
MP-SVC skips the earliest chunks if there are any. Exchange algorithm (algorithm 4) replaces any layers fetched over link 2 by earliest possible. Hence, all chunks fetched using lower priority links are the earliest possible. Therefore, if the ordered set of base layers that are decided to be fetched by link 2 for the Pref MP-SVC is $i_1, i_2, \cdots, i_H$ and for any feasible algorithm with same number of layers decided to be fetched by link 2 which are $j_1, j_2, \cdots, j_H$, then $i_k \le j_k$ for  all $k = 1, \cdots, H$.  We know that earlier skips and moving earlier layers to link 2 (less preferable link) provides available bandwidth for more chunks in link 1 since it comes before the deadline of more chunks. This proves the result as in the statement of the lemma since any other feasible algorithm with more number of  layer skips/movements will achieve smaller objective when $\lambda$s satisfies \eqref{basic_gamma_1_0} and \eqref{basic_gamma_1_1}, thus showing that it will not be optimal.

\subsection{Proof of Theorem \ref{theorem: theorem2}}\label{Prefproof}

Use of  Lemma \ref{lem:avoid_skips} shows that the proposed algorithm is optimal for base layer skips, and use of Lemma \ref{lem:avoid_skips3} shows that the algorithm achieves the minimum number of base layer skips with minimum usage of link 2 (less preferable link).  According to lemma \ref{lem:avoid_skips4}, running Avoid-Skips MP-SVC algorithm offers the maximum bandwidth per chunk for $E_1$ layer decisions among all feasible algorithms with same number of skips. Therefore, the bandwidth that is passed to $E_1$ decision is the maximum per chunk. The rest of the algorithm is a running of MP-SVC for link 1 and a link with the bandwidth being zero all times. MP-SVC is shown to be optimal in avoiding skips theorem \ref{them:skip1} and optimal in providing more bandwidth to every chunk by skipping the layers of the earliest chunks if there are any \ref{proofPref3}. Therefore, MP-SVC is achieving the optimal decision of the $n^{th}$ layer and passing the optimal bandwidth profile to the $(n+1)^{th}$ layer. That concludes the proof.

%\input{pref_formulation}
%\input{skip_optimality}

%\input{em_vs_sim}
%\input{noSkip_results2}
%\vspace{-.1in}

%\input{conclusion}
%\vspace{-.1in}
%\input{conclusion}
%\bibliographystyle{IEEEtran}
%\scriptsize
%\bibliographystyle{ACM-Reference-Format} %

%\normalsize
%\newpage
%\clearpage
%\vspace{-.1in}

\if0
\clearpage

\begin{eqnarray}
\textbf{Maximize: } \sum_{i=1}^C\sum_{j=1}^{(i-1)L+s}\sum_{n=0}^{N}\Big(\lambda_n^1z_n^{(1)}(i,j)+\lambda_n^2z_n^{(2)}(i,j)\Big) %- \lambda \sum_{j=1}^{deadline(C)} B(j)^{(2)}  \Bigg)
 %\sum_{n=0}^{N} \lambda_n \gamma^n \sum_{i=1}^{C}\beta^i B_{n,i}^{(2)}  \Bigg)
\label{equ:eq1}
\end{eqnarray}
%\vspace{-.1in}
subject to
%\begin{eqnarray}
%\sum_{j=1}^{(i-1)L+s} z_0^{(2)}(i,j)+z_0^{(1)}(i,j) = Z_{0,i}   \forall i 
%\label{equ:c1eq1}
%\end{eqnarray}
%%\vspace{-.1in}
\begin{eqnarray}
\sum_{j=1}^{(i-1)L+s} z_n^{(1)}(i,j)+z_n^{(2)}(i,j) = Z_{n,i}\quad  \forall i,  n 
\label{equ:c2eq1}
\end{eqnarray}
%\vspace{-.1in}
\begin{eqnarray}
Z_{n,i}\le \frac{Y_n}{Y_{n-1}}Z_{n-1,i}\quad  \forall i,  n>0 
\label{equ:c2eq11}
\end{eqnarray}
%\vspace{-.1in}
\begin{eqnarray}
\sum_{n=0}^N\sum_{i=1}^{C} z_n^{(k)}(i,j)  \leq B^{(k)}(j) \  \   \forall k\in\{1,2\}, \forall j,
\label{equ:c3eq1}
\end{eqnarray}
%\vspace{-.1in}
%\vspace{-.1in}
\begin{equation}
z_n^{(k)}(i,j) \geq 0\   \forall k\in\{1,2\}, \forall i 
\label{equ:c55eq1}
\end{equation}
\begin{equation}
z_n^{(2)}(i,j) = 0\   \forall n > n_2, \forall i 
\label{equ:c555eq1}
\end{equation}
%\vspace{-.1in}
\begin{equation}
z_n^{(k)}(i,j)= 0\   \forall \{i: (i-1)L+s < j\}, k\in \{1,2\}
\label{equ:c6eq1}
\end{equation}
%\vspace{-.1in}
\begin{equation}
Z_{n,i} \in {\mathcal Z}_n\triangleq \{0, Y_n\} \quad  \forall i, n
\label{equ:c7eq1}
\end{equation}
\vspace{-.2in}
\begin{eqnarray}
\text{Variables:}&& z_n^{(2)}(i,j), z_n^{(1)}(i,j),  Z_{n,i} \ \ \  \forall   i = 1, \cdots, C, \nonumber \\&& j = 1, \cdots, (C-1)L+s, n = 0, \cdots, N \nonumber
\end{eqnarray}

\begin{equation}
 \Big(\sum_{i=1}^C\sum_{j=1}^{(i-1)L+s}\sum_{n=0}^{N}\big(\lambda_n^1z_n^{(1)}(i,j)+\lambda_n^2z_n^{(2)}(i,j)\big)-\mu d(C)\Big)
 \label{equ:noSkipObj}
\end{equation}

 %$ \sum_{n=1}^{N}\gamma^n\sum_{i=1}^{C} Z_{n,i}-\mu \sum_{i=1}^C d(i),$
where the weight for the stall duration is chosen such that $\mu \gg \lambda_n^{(1)}$, since users tend to care more about not running into re-buffering over better quality. 
This is a multi-objective optimization problem with quality and stalls as the two objectives, and is formulated as follows.
\begin{eqnarray}
%\hspace{-0.2in}
\textbf{Maximize: } (\ref{equ:noSkipObj})
\label{equ:eq2}
\end{eqnarray}
%\vspace{-.1in}
%\begin{align*}
%& maximize \Bigg(\sum_{n=0}^{N}\gamma^n\sum_{i=1}^{C}q(Z_{n,i})\Bigg)
%\label{equ:eq1}
%\end{align*}
subject to  \eqref{equ:c2eq11}, \eqref{equ:c3eq1}, \eqref{equ:c55eq1}, \eqref{equ:c555eq1} ,  \eqref{equ:c7eq1},
\begin{eqnarray}
\sum_{j=1}^{(i-1)L+s+d(i)} z_n^{(2)}(i,j)+z_n^{(1)}(i,j) = Z_{n,i}\quad  \forall i,  n 
\label{equ:d2eq1}
\end{eqnarray}
%\vspace{-.1in}

%\vspace{-.1in}
\begin{equation}
z_n^{(k)}(i,j)= 0\   \forall \{i: (i-1)L+s+d(i) > j\}, k\in \{1,2\}
\label{equ:d6eq1}
\end{equation}
%\vspace{-.1in}
\begin{equation}
d(i) \geq d(i-1)\geq 0\   \forall i>0 \label{deq}
\end{equation}
%\vspace{-.1in}
\begin{equation}
Z_{0,i} =Y_0
\label{equ:d10eq1}
\end{equation}
\vspace{-.2in}
\begin{eqnarray}
\text{Variables:}&& z_n^{(2)}(i,j), z_n^{(1)}(i,j),  Z_{n,i}, d(i) \forall   i = 1, \cdots, C,  \nonumber \\
&& j = 1, \cdots, (C-1)L+s+d(C), n = 0, \cdots, N \nonumber
\end{eqnarray}

\fi
\end{document}